\documentclass[11pt]{article}
\usepackage[utf8]{inputenc}
\usepackage[T2A]{fontenc}
\usepackage[english]{babel}
\usepackage{amsmath,amssymb,amsthm}
\selectlanguage{english}
\usepackage[x11names, rgb]{xcolor}
\usepackage{fullpage}
\usepackage{verbatim}
\usepackage{bbold,xspace}
\usepackage{tikz}
\usepackage{verbatim}
\usepackage{relsize}
\usepackage{authblk}

\DeclareMathOperator{\poly}{poly}

\newtheorem{claim}{Claim}
\newtheorem{lemma}{Lemma}
\newtheorem{theorem}{Theorem}
\newtheorem{theoremstar}[theorem]{Theorem}
\newtheorem{lemmastar}[lemma]{Lemma}

\newtheorem{remark}{Remark}

\newtheorem{proposition}{Proposition}

\usepackage{todonotes}

\newcommand{\cO}{\mathcal{O}}

\newcommand{\cOs}{\mathcal{O}^*}
\newcommand{\cL}{\mathcal{L}}

\newcommand{\vc}{\operatorname{vc}}
\newcommand{\tw}{\operatorname{tw}}

 
\title{Lower Bounds for the Graph Homomorphism Problem\thanks{The research leading to these results has received funding from the Government of the Russian Federation (grant 14.Z50.31.0030), Grant of the President of Russian Federation (МК-6550.2015.1).}}
\author[1,4]{Fedor~V.~Fomin}
\author[2,4]{Alexander~Golovnev}
  \author[4]{Alexander~S.~Kulikov}
\author[3,4]{Ivan~Mihajlin}
 
  \affil[1]{University of Bergen, Norway}
    \affil[2]{New York University, USA}
      \affil[3]{University of California---San Diego, USA}
        \affil[4]{St.~Petersburg Department of Steklov Institute of Mathematics of the Russian Academy of Sciences, Russia}
\date{}


\sloppy
\begin{document}

\maketitle

\begin{abstract}

The graph homomorphism problem (HOM) asks whether the vertices
of a given $n$-vertex graph $G$ can be mapped to the vertices 
of a given $h$-vertex graph $H$ such that each edge of $G$ is mapped to an edge of~$H$. The problem generalizes the graph coloring problem and
at the same time can be viewed as a special case of the $2$-CSP problem.
In this paper, we prove several lower bound for HOM under the Exponential Time Hypothesis (ETH) assumption. The main result is a lower bound
$2^{\Omega\left( \frac{n \log h}{\log \log h}\right)}$. This rules out the existence of a single-exponential algorithm and shows that the trivial upper bound $2^{\cO(n\log{h})}$ is almost asymptotically tight.

We also investigate what properties of graphs $G$ and $H$ make it difficult to solve HOM$(G,H)$. An~easy observation is that an $\cO(h^n)$ upper bound can be improved to $\cO(h^{\vc(G)})$ where $\vc(G)$ is the minimum size of a vertex cover of~$G$. The second lower bound $h^{\Omega(\vc(G))}$ 
shows that the upper bound is asymptotically tight. As to the properties of the ``right-hand side'' graph~$H$, it is known that HOM$(G,H)$ can be solved in time $(f(\Delta(H)))^n$ and $(f(\tw(H)))^n$ where $\Delta(H)$ is the maximum degree of $H$ and $\tw(H)$ is the treewidth of~$H$. This gives single-exponential algorithms for graphs of bounded maximum degree or bounded treewidth. Since the chromatic number $\chi(H)$
does not exceed $\tw(H)$ and $\Delta(H)+1$, it is natural to ask whether
similar upper bounds with respect to $\chi(H)$ can be obtained. We provide a negative answer to this question by establishing a lower bound $(f(\chi(H)))^n$
for any function~$f$. We also observe that similar lower bounds can be obtained for locally injective homomorphisms. 
\end{abstract}


\section{Introduction}\label{sec:intro}
A {\em homomorphism} $G\to H$ from an undirected graph $G$ to an undirected graph $H$ is a mapping
from the vertex set   $G$ to that of $H$ such that the image of every edge of $G$ is an edge of $H$. Then the \textsc{Graph Homomorphism} problem HOM$(G,H)$ is the problem to decide for given graphs $G$ and $H$, whether $G\to H$.
Many combinatorial structures in $G$, for example independent sets and proper vertex colorings,
may be viewed as graph homomorphisms to a particular graph $H$, see the  book of Hell and Ne\v set\v ril~\cite{HellN04}
for a thorough introduction to the topic. 
It was shown by 
Feder and Vardi in \cite{FederV98}  that \textsc{Constraint Satisfaction Problem}  (CSP) can be interpreted  as a homomorphism problem on relational structures, and thus  \textsc{Graph Homomorphism}  encompasses a large family of problems generalizing  \textsc{Coloring}  but less general  than CSP.

Hell and Ne\v set\v ril showed that for any fixed simple graph $H$,
the problem whether there exists a homomorphism from $G$ to $H$  is solvable in
polynomial time if $H$ is bipartite,
and NP-complete if $H$ is not bipartite~\cite{HellN90-On}. 
Since then algorithms and complexity of graph and structures homomorphisms were studied intensively
\cite{Austrin10,BartoKN08,Grohe07,Marx10,Raghavendra08}.

There are two different ways  graph homomorphisms are used to extract useful information about graphs. Let us consider two homomorphisms, from a ``small" graph $F$ into a ``large'' graph $G$ and from a ``large'' graph $G$ into a  ``small" graph $H$, which can be represented by the following formula  (here we borrow  the intuitive description from the  Lov{\'a}sz's book \cite{lovasz2012large})
\[
F\to {\mathlarger{\mathlarger G}} \to H.
\]
Then ``left-homomorphisms" from various small graphs $F$ into $G$ are useful to study the local structure of $G$. For example, if $F$ is a triangle, then the number of  ``left-homomorphisms" from $F$ into $G$ is the number of triangles in graph $G$. 
This type of information is closely related  to sampling, and we refer to the book of Lov{\'a}sz  \cite{lovasz2012large} providing many applications of homomorphisms.  ``Right-homomorphisms" into ``small" different graphs $H$  are related to global observables about graph $G$. 

The trivial brute-force algorithm solving   ``left-homomorphism" from an $f$-vertex graph $F$ into an $n$-vertex graph $G$ runs in time $2^{\cO(f \log{n})}$:
 we try all possible vertex subsets of $G$ of size at most $f$, which is $n^{\cO(f)}$ and then for each subset try all possible $f^f$ mappings into it from $F$.  Interestingly,  this na\"{\i}ve  algorithm is asymptotically optimal.  Indeed, 
as it was shown by 
 Chen et al. \cite{Chen20061346}, assuming Exponential Time Hypothesis (ETH), there is no $g(k)n^{o(k)}$ time algorithm deciding if an input $n$-vertex graph $G$ contains a clique of size at least $k$,  for any computable function $g$.  Since this is a very special case of  \textsc{Graph Homomorphism}    HOM$(F,G)$ with  $F$ being a clique of size $k$, the result of  Chen et al. rules out algorithms for  \textsc{Graph Homomorphism} of running time $g(f)2^{o(f\log{n})}$, from $F$ to $G$, when the number of vertices $f$ in $F$ is significantly smaller than the number of vertices $n$ in $G$. 
 
 Brute-force for  ``right-homomorphism" HOM$(G,H)$, checking all possible mappings from $G$ into $H$, also runs in time $2^{\cO(n \log{h})}$, where $h$ is the number of vertices in $H$.  However, prior to our work there were no results 
 indicating that asymptotically better algorithms, say of running time $2^{\cO({n})}$, are highly unlikely. 
%
%

Our interest in  ``right-homomorphisms"   is due to the recent developments in the area of exact exponential algorithms for \textsc{Coloring} and \textsc{$2$-CSP} problems. 
The area of exact exponential algorithms is about solving intractable problems significantly faster than the trivial exhaustive search, though still in exponential time \cite{FominKratschbook10}. For example, as for  \textsc{Graph Homomorphism}, a na\"{\i}ve  brute-force algorithm for coloring an $n$-vertex  graph $G$ in $h$ colors is to try for every vertex a possible color, resulting in the running time $\cOs(h^n)=2^{\cO(n\log{h})}$.\footnote{$\cOs(\cdot)$ hides polynomial factors in the input length. Most of the algorithms considered in this paper take graphs $G$ and $H$ as an input. By saying that such an algorithm has a running time $\cOs(f(G,H))$ we mean that the running time is upper bounded by
$p(|V(G)|+|E(G)|+|V(H)|+|E(H)|)\cdot f(G,H)$ for a fixed polynomial~$p$.} 
Since $h$ can be of order $\Omega(n)$, the brute-force algorithm computing the chromatic number  runs in time $2^{\cO(n\log{n})}$. It was already observed in 1970s by Lawler   \cite{Lawler76} that the brute-force for  the  \textsc{Coloring}  problem can be  beaten by making use of dynamic programming over maximal independent sets  resulting in single-exponential  running time  $\cOs((1+\sqrt[3]{3})^n)=\cO( 2.45^n)$.
  Almost 30 years later Bj\"{o}rklund, Husfeldt, and Koivisto
 \cite{BjorklundHK2009-Se} succeeded to reduce the running time to $\cOs( 2^n)$.  It is well-known that  \textsc{Coloring} is a special case of graph homomorphism. More precisely, graph $G$ is colored in at most $h$ colors if and only if $G\to K_h$, where $K_h$ is a complete graph on $h$ vertices. Due to this, very often in the literature HOM($G,H$), when $h=|V(H)|\leq n$,  is referred as $H$-coloring of~$G$. And as we observed already,  for $H$-coloring, the brute-force algorithm solving 
 $H$-coloring runs in time $2^{\cO(n\log{h})}$. In spite of  all the similarities between graph coloring and homomorphism, no substantially faster algorithm was known and it was an open question in the area of exact algorithms if there is a single-exponential algorithm solving $H$-coloring in time  $2^{\cO(n+h)}$ \cite{FHK2007,Rzazewski14,Wahlst10,W2011}, see also \cite[Chapter 12]{FominKratschbook10}.

On the other hand, \textsc{Graph Homomorphism} is a special case of \textsc{2-CSP} with $n$  variables and domain of size $h$.  It was shown by Traxler  \cite{T2008} that unless the Exponential Time Hypothesis (ETH) fails, there is no algorithm solving 
\textsc{2-CSP} with $n$ variables and domain of size $h$ in time  $h^{o(n)}=2^{o(n\log h)}$. This excludes (up to ETH)  the existence of a single-exponential $c^n$ time algorithm for some constant $c>1$ for \textsc{2-CSP}.

 \paragraph{Our results.}
In this paper we show that from the algorithmic perspective,   the behavior of  ``right-homomorphism"  is, unfortunately, much closer to  \textsc{2-CSP} than to 
\textsc{Coloring}. The main result of this paper is the following theorem, which 
excludes  (up to ETH) resolvability of  HOM$(G,H)$ in time
$2^{o\left(\frac{n\log h}{\log \log h}\right)}$.


\newsavebox{\boxmainboundone}
\sbox\boxmainboundone{\parbox{\textwidth}{
\begin{theorem}\label{main:theorem_homs} 
Unless ETH fails, for any constant $d>0$ there exists a constant $c=c(d)>0$ such that
for any 
function
$3\le h(n)\le n^d$,
there is no algorithm solving HOM$(G,H)$ for an $n$-vertex graph $G$ and $h(n)$-vertex graph $H$ in time 
\begin{equation}\label{eq:vert}
\cOs\left(2^{\frac{cn\log{h(n)}}{\log\log{h(n)}}} \right) \,.
\end{equation}
\end{theorem}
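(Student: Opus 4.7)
The plan is to reduce 3-SAT to $\mathrm{HOM}$ with carefully tuned parameters so that a putative $2^{o(n\log h/\log\log h)}$-time $\mathrm{HOM}$ algorithm would yield a $2^{o(N)}$-time algorithm for 3-SAT, contradicting ETH. By ETH together with the Sparsification Lemma, we may assume the input is a 3-SAT formula $\varphi$ on $N$ variables and $M=\cO(N)$ clauses, every variable occurring in $\cO(1)$ clauses, with no $2^{o(N)}$ algorithm. The goal of the reduction is to produce, for any admissible $h=h(n)$, graphs $G$ and $H$ with $|V(G)|=n$ and $|V(H)|\leq h$ such that $G\to H$ iff $\varphi$ is satisfiable, and with the parameter identity $n\log h/\log\log h = \Theta(N)$.

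The construction packs variables and clauses into blocks of size $k=\Theta(\log h/\log\log h)$. Partition the $N$ variables into $p=\lceil N/k\rceil$ variable-blocks $V_1,\dots,V_p$ and the $M$ clauses into $q=\cO(N/k)$ clause-blocks $U_1,\dots,U_q$. In $G$, introduce a variable vertex $v_i$ for each $V_i$, a clause-block vertex $u_j$ for each $U_j$, and an edge $u_jv_i$ whenever some clause in $U_j$ involves a variable of $V_i$. In $H$, introduce an assignment vertex $a_{i,\alpha}$ for each $(i,\alpha)\in[p]\times\{0,1\}^k$ and a witness vertex $w_{j,\sigma}$ for each clause-block $U_j$ and each partial assignment $\sigma$ to the (at most $3k$) variables of $U_j$ that satisfies every clause of $U_j$; join $a_{i,\alpha}$ to $w_{j,\sigma}$ precisely when $V_i\cap U_j\neq\emptyset$ and $\alpha$ agrees with $\sigma$ on $V_i\cap U_j$. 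Small \emph{identity} gadgets attached to each $v_i$ and $u_j$ in $G$, together with the corresponding rigid templates in $H$, force each $v_i$ into the $a_{i,\cdot}$ region and each $u_j$ into the $w_{j,\cdot}$ region; one then checks that homomorphisms $G\to H$ are precisely the satisfying assignments of $\varphi$.

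Parameter accounting gives $|V(G)|=\cO(N/k)$ (plus gadget overhead) and $|V(H)|=\cO((N/k)\cdot 2^{\cO(k)})$; with $k=\Theta(\log h/\log\log h)$ we have $2^{\cO(k)}=h^{\cO(1/\log\log h)}=h^{o(1)}$, so $|V(H)|\leq h$ once the constant in $k$ is chosen small enough. Substituting into the claimed bound yields $n\log h/\log\log h=\cO((N/k)\cdot(\log h/\log\log h))=\cO(N)$, so any $2^{o(n\log h/\log\log h)}$-time $\mathrm{HOM}$ algorithm would indeed give a $2^{o(N)}$-time algorithm for 3-SAT. The main technical obstacle is the construction of the identity gadgets: since $H$ is an ordinary graph (no labels or distinguished edges), one needs a rigid template with essentially unique homomorphic images in $H$, one per block, whose total vertex cost fits inside the $h^{1-o(1)}$ slack between $2^k$ and $h$. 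Making this work uniformly across the whole admissible range $3\le h(n)\le n^d$---in particular for small $d$, where the slack in $H$ is tight and one may have to tune $k$ carefully or invoke refinements of bounded-occurrence 3-SAT---is the heart of the proof; very small $h$ (constant or nearly so) can be handled separately using the classical $2^{\Omega(n)}$ ETH lower bound for $h$-coloring.
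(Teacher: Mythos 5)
There is a genuine gap, and in fact two. The first is quantitative and fatal to the stated range of $h$: your target graph $H$ contains an assignment vertex $a_{i,\alpha}$ for every pair $(i,\alpha)\in[p]\times\{0,1\}^k$ (and similarly witness vertices per clause-block), so $|V(H)|\ge p\cdot 2^k=\Theta\bigl((N/k)\cdot 2^k\bigr)=\Theta\bigl(n\cdot 2^{\Theta(k)}\bigr)$, i.e., $|V(H)|$ grows \emph{linearly in $n$} times the $2^{\cO(k)}$ factor. Your accounting ``$|V(H)|=\cO((N/k)\cdot 2^{\cO(k)})\le h$'' therefore requires $h\ge n^{1+o(1)}$, whereas the theorem must cover every function $3\le h(n)\le n^d$, including $h(n)=\sqrt{n}$ or $h(n)=\log n$, where your $H$ is far larger than allowed. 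The root cause is indexing the regions of $H$ by the block number $i\in[p]$. The paper avoids exactly this: it groups the vertices of a bounded-degree \textsc{3-Coloring} instance into buckets and then labels buckets by a proper coloring of the \emph{square} of the grouped graph, which needs only $L=\cO(r^2)$ labels. Configurations in the target graph are indexed by this constant-size label palette rather than by the bucket index, so $|V(H')|\le r^{50r}$ is independent of $n$; the bounded degree of the source instance is what makes this possible. Without some analogous idea your parameter identity collapses for all $h(n)\le n$.

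The second gap is the identity/rigidity gadgets, which you correctly identify as ``the heart of the proof'' but do not construct. This is not a routine step: in a plain graph with no lists, forcing designated vertices of $G$ to land in designated regions of $H$ is the main technical content of the paper. The paper handles it by first proving the lower bound for \textsc{List Graph Homomorphism} (where the lists do the forcing for free) and then giving a separate list-removal reduction: a single global rigid spine built from chained wheel gadgets (a $5$-cycle plus apex, whose self-homomorphisms fix the apex), with cliques of size exceeding $\chi(H)$ injected to prevent escape into $H$, attached to a matching $A_h$ whose edges encode membership in the lists. Proving that every homomorphism fixes this spine pointwise takes several nontrivial claims, and the spine's size $\cO(h\cdot\chi(H))$ must be charged against the budget. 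Your proposal would need both this construction and a fix for the first issue before it could be considered a proof; as written it establishes the easy direction (a satisfiable formula yields a homomorphism) and the parameter arithmetic, but not the reduction itself.
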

}}
\noindent\usebox{\boxmainboundone}

\begin{remark}
In order to obtain more general results, in all lower bounds proven in this paper we assume implicitly that the number $h$ of vertices of the graph $H$ is a function of the number $n$ of the vertices of the graph~$G$. At the same time, to exclude some pathological cases we assume that the function $h(n)$ is ``reasonable'' meaning that it is non-decreasing and time-constructible.
\end{remark}

While Theorem~\ref{main:theorem_homs} rules out  the existence of 
a single-exponential algorithm for \textsc{Graph Homomorphism}, single-exponential algorithms can be found in the literature for a number of restricted conditions on  the ``right hand"  graph $H$. For example, when the treewidth of $H$ is at most $t$,  or more generally, when the clique-width of the core of $H$ does not exceed $t$,  the problem is solvable in time  $f(t)^n$ for some function $f$~\cite{W2011}. Another example is when the maximum vertex degree $\Delta(H)$ of $H$ is bounded by a constant. In this case, it is easy to see that a simple branching algorithm also resolves  HOM$(G,H)$  in single-exponential time. Since the chromatic number $\chi(H)$ of $H$ does not exceed 
the treewidth of $H$ and $\Delta(H)$ (plus one), it is natural to ask if a single-exponential algorithm exists when the chromatic number of $H$ is bounded. Unfortunately, this is unlikely to happen.


\newsavebox{\boxchibound}
\sbox\boxchibound{\parbox{\textwidth}{
\begin{theoremstar}\label{chrom:theorem_homs}
Unless ETH fails, 
for any function 
$f \colon \mathbb{N}\to \mathbb{N}$ 
there is no algorithm solving HOM$(G,H)$ for an $n$-vertex graph $G$ and 
a 
graph $H$ in time \(\cOs\left(\left(f(\chi(H))\right)^n\right) \, .\)
\end{theoremstar}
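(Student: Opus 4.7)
The plan is to derive the statement from the main lower bound (Theorem~\ref{main:theorem_homs}) by exhibiting, for every function~$f$, a family of HOM instances in which $\chi(H)$ is bounded by a fixed small constant $k$ (say $k=3$) while $|V(H)|$ grows with $n=|V(G)|$. Once such a family is in hand, a hypothetical algorithm running in time $\cOs\bigl((f(\chi(H)))^n\bigr)$ would solve these instances in $(f(k))^n = c^n$ steps with $c=f(k)$ a fixed constant, which contradicts the super-exponential lower bound $2^{\Omega(n\log h/\log\log h)}$ provided by Theorem~\ref{main:theorem_homs} as soon as $h(n)\to\infty$. For instance, taking $h(n)=\lceil\log n\rceil$ already makes $\log h(n)/\log\log h(n)$ tend to infinity while respecting the regime $3\le h(n)\le n^d$ of Theorem~\ref{main:theorem_homs}.

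The technical core is therefore a \emph{chromatic reduction}: given a hard instance $(G,H)$ from Theorem~\ref{main:theorem_homs} with $|V(G)|=n$ and $|V(H)|=h$, we build an equivalent instance $(G',H')$ with $\chi(H')\le k$, $|V(H')|\le\poly(h)$, and, crucially, $|V(G')|=\cO(n)$. The natural attempt is a gadget replacement: substitute every vertex $v\in V(H)$ by a fixed constant-size ``selector'' gadget $X_v$, replace every edge of $H$ by an ``edge gadget'' joining the two corresponding selectors, and, on the $G$-side, inflate every vertex of $G$ into a matching constant-size gadget that forces its image to land inside exactly one of the $X_v$. The gadgets must be designed so that (a)~the resulting $H'$ admits a proper $k$-colouring and (b)~homomorphisms $G\to H$ are in bijection with homomorphisms $G'\to H'$. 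Given such a reduction, composing a hypothetical $\cOs\bigl((f(\chi(H')))^{|V(G')|}\bigr)$-time algorithm with the polynomial-time reduction produces a $2^{\cO(n)}$-time algorithm for the original $(G,H)$, contradicting Theorem~\ref{main:theorem_homs} whenever $h(n)\to\infty$.

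The hard part will be (b) together with the linear-size constraint $|V(G')|=\cO(n)$. A naive binary-style encoding of a value in $V(H)$ by a $3$-chromatic gadget uses $\Theta(\log h)$ bits, inflating $|V(G')|$ to $\Theta(n\log h)$; Theorem~\ref{main:theorem_homs}'s bound $2^{\Omega(n\log h/\log\log h)}$ then becomes $2^{\Omega(|V(G')|/\log\log h)}=2^{o(|V(G')|)}$, which no longer beats $c^{|V(G')|}$. Consequently, the $G$-side gadget must have size depending only on $k$ (not on $h$); the whole blow-up has to be absorbed on the $H$-side, where a $\poly(h)$ explosion remains affordable. Designing such a gadget---or, equivalently, observing that the construction behind Theorem~\ref{main:theorem_homs} can itself be arranged so that the right-hand side graph is already $k$-chromatic---is the non-routine step of the proof; once it is in place, the diagonal-style contradiction above is immediate.
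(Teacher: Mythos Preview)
Your proposal correctly identifies what is needed---a family of hard HOM instances in which $\chi(H)$ is bounded by an absolute constant---but explicitly stops short of constructing it; you label this construction ``the non-routine step'' and leave it open. That step \emph{is} the theorem, so as written the proposal is a plan rather than a proof. Moreover, the primary route you sketch (a black-box reduction turning an arbitrary hard instance $(G,H)$ from Theorem~\ref{main:theorem_homs} into $(G',H')$ with $\chi(H')\le k$ and $|V(G')|=\cO(n)$) is not supplied and is exactly where the difficulty lies: encoding ``which of the $h$ vertices of $H$'' a $G$-vertex maps to, while keeping $\chi(H')$ bounded, seems to force $\Omega(\log h)$ information per $G$-vertex, the very blow-up you correctly diagnose as fatal.

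The paper sidesteps this by not invoking Theorem~\ref{main:theorem_homs} at all. It returns to the \textsc{$3$-Coloring} source and modifies the pipeline: greedily $5$-colour the degree-$4$ input $G$, then apply Lemma~\ref{lemma:3coltolisthom} with buckets that are \emph{monochromatic} in this $5$-colouring, so the grouped graph $G_1$ inherits $\chi(G_1)\le 5$. Bounding $\chi$ on the \emph{left} side is the key idea you are missing: it enables Lemma~\ref{lemma:chromatic}, which takes a $k$-colouring of $G_1$ and produces an equisatisfiable LIST-HOM instance with $\chi(H_2)\le 5$ while leaving $G_1$ unchanged and only multiplying $|V(H)|$ by~$5$. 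Lemma~\ref{lemma:lhomtohom} then yields a HOM instance with $\chi(H_3)\le 15$ and $|V(G_3)|\le 2n/r$. Here $r$ is a \emph{constant} chosen after $f$ so that $f(15)^{2/r}<2^{\beta}$; hence an $\cOs\bigl((f(\chi(H_3)))^{|V(G_3)|}\bigr)$ algorithm would solve \textsc{$3$-Coloring} in time $\cOs(f(15)^{2n/r})$, contradicting Lemma~\ref{lemma:3col}.
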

}}
\noindent\usebox{\boxchibound}

Another interesting question about homomorphisms concerns the complexity of the problem when graph $G$ poses a specific structure.
In particular,    when the treewidth of     $G$ does not exceed $t$, then HOM$(G,H)$ is solvable in time $\cOs(h^{t})$ \cite{DiazST02}. 
Let $\vc(G)$ be the minimum size of a vertex cover in graph $G$. We prove that 

\newsavebox{\boxvcbound}
\sbox\boxvcbound{\parbox{\textwidth}{
\begin{theoremstar}\label{vc:theorem_homs}
Unless ETH fails, for any constant $d$ there exists a constant $c=c(d)>0$ such that
for any 
function
$3\le h(n)\le n^d$,
there is no algorithm solving HOM$(G,H)$
for an $n$-vertex graph $G$ and $h(n)$-vertex graph $H$
in time
\(
\cOs\left(h(n)^{c\cdot \vc(G)}\right) \, .
\)

\end{theoremstar}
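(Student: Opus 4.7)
The plan is to reduce Theorem~\ref{vc:theorem_homs} to Theorem~\ref{main:theorem_homs} via the algebraic identity
\[
h^{c\,\vc(G)} \;=\; 2^{c\,\vc(G)\log h}.
\]
In view of this identity, it suffices to exhibit, for each $d$ and each function $3\le h(n)\le n^d$, a family of hard HOM instances in which $\vc(G) = \cO(n/\log\log h)$: combined with the $2^{\Omega(n\log h/\log\log h)}$-lower bound of Theorem~\ref{main:theorem_homs}, an algorithm running in time $\cOs(h^{c\,\vc(G)}) = 2^{\cO(cn\log h/\log\log h)}$ with $c=c(d)$ small enough would contradict Theorem~\ref{main:theorem_homs}.

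Thus the technical task is to inspect, and mildly re-engineer, the reduction behind Theorem~\ref{main:theorem_homs} so as to guarantee a small vertex cover. That reduction starts from a sparsified 3-SAT formula on $N$ variables and $M=\cO(N)$ clauses (for which ETH excludes $2^{o(N)}$-time algorithms) and partitions the variables into blocks of size $t=\Theta(\log h/\log\log h)$, producing $k=\Theta(N\log\log h/\log h)$ \emph{block vertices} in $G$. The clauses, rather than being encoded one per gadget vertex (which would blow up the cover to $\Theta(N)$), are bundled into \emph{cluster gadgets} of $\Theta(\log h)$ clauses each, giving $\mu=\cO(N/\log h)$ cluster-gadget vertices. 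Each cluster gadget is a single vertex attached to the block vertices whose variables appear in its clauses, and we arrange that both the block vertices and the cluster-gadget vertices each form an independent set, with every edge of $G$ running between the two sides. Consequently $n=|V(G)|=k+\mu=\Theta(N\log\log h/\log h)$ is dominated by the block side, and the cluster-gadget side is a vertex cover of size $\mu = \cO(N/\log h)=\cO(n/\log\log h)$.

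The main obstacle lies in the cluster gadget itself: a single cluster-gadget vertex $u$, colored from a domain of size $\le h(n)$, must simultaneously certify the satisfaction of $\Theta(\log h)$ clauses, and this certification must be expressible as a single binary relation in $H$ that is edge-consistent with each neighbouring block vertex's color-encoded assignment of $t=\Theta(\log h/\log\log h)$ variables. This is precisely the ``compression'' step that introduces the $\log\log h$ loss in the exponent of Theorem~\ref{main:theorem_homs}; for Theorem~\ref{vc:theorem_homs} we additionally need the bipartition structure (independent blocks on one side, independent cluster gadgets on the other) so that no internal gadget edges inflate the cover. Once this bipartite compression is verified---essentially a rereading of the Theorem~\ref{main:theorem_homs} construction with the vertex-cover bookkeeping made explicit---the theorem follows from the algebraic step above with $c(d) < c'(d)/\gamma$, where $c'(d)$ is the constant from Theorem~\ref{main:theorem_homs} and $\gamma$ is the implicit constant in $\vc(G)\le \gamma n/\log\log h$.
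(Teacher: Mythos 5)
Your high-level plan --- produce hard instances in which the vertex cover sits on one side of a bipartition and is small relative to $n$, then convert $h^{c\,\vc(G)}=2^{c\,\vc(G)\log h}$ into a forbidden running time --- is the right idea, and it is in fact how the paper proceeds (Lemma~\ref{lemma:3coltolhomvc} subdivides every edge of the $r$-grouping so that the buckets form an independent set and constitute a cover of size $\lceil n/r\rceil$, while the subdivision vertices check consistency of adjacent buckets' colorings). But the proposal stops exactly where the proof begins. You explicitly name the ``compression'' gadget --- a single cover-side vertex certifying many constraints against the color-encoded assignments of its neighbours --- as the main obstacle, and then assert that ``once this bipartite compression is verified \dots the theorem follows.'' That gadget, together with the definition of $H'$ and the lists, \emph{is} the content of the reduction; nothing in the proposal constructs it or proves its correctness. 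Moreover, the premise that one can obtain it by ``rereading the Theorem~\ref{main:theorem_homs} construction'' is false on two counts: (i) your description of that construction (sparsified 3-SAT, variable blocks, cluster gadgets of $\Theta(\log h)$ clauses) does not correspond to the paper's reduction, which goes through \textsc{3-Coloring} on degree-four graphs and edge-preserving $r$-groupings; and (ii) the actual hard instances of Theorem~\ref{main:theorem_homs} do \emph{not} have small vertex cover --- the grouped graph $G'$ there has buckets adjacent to one another, so $\vc(G'')$ can be $\Theta(|V(G'')|)$ --- which is precisely why the paper needs a separate bipartite reduction for this theorem rather than reusing the earlier one.

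Two further gaps. First, you ignore the passage from LIST-HOM to HOM entirely: removing the lists (Lemma~\ref{lemma:lhomtohom}) attaches gadgets that add $(|V(H')|+1)(\chi(H')+11)$ vertices to \emph{both} the graph and its vertex cover, and one must choose the group size $r$ (in the paper, $r=\Theta(\log h)$, not $\Theta(\log h/\log\log h)$ --- the $\log\log h$ loss plays no role in this theorem) so that this additive term is $o(n/r)$ and so that $|V(H'')|\le h(\vc(G''))\le h(|V(G'')|)$, since $h$ is a function of the number of vertices of the final graph. Second, the logic ``an $\cOs(h^{c\,\vc(G)})$ algorithm would contradict Theorem~\ref{main:theorem_homs}'' is not sound as stated: such an algorithm need not run in time $2^{cn\log h/\log\log h}$ on all instances, so it does not contradict the \emph{statement} of Theorem~\ref{main:theorem_homs}; the contradiction must be derived directly with ETH (via Lemma~\ref{lemma:3col}) by running the hypothetical algorithm on the explicitly constructed family, which again requires the construction you have deferred.
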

}}
\noindent\usebox{\boxvcbound}

%
%
 
 Since $\vc(G)$ is always at most the treewidth of $G$, Theorem~\ref{vc:theorem_homs} shows that the known bounds $\cOs(h^{t})=\cOs(h^{\vc(G)})$ on the complexity of homomorphisms from graphs of bounded treewidth and vertex cover  are asymptotically optimal (Note that the minimum vertex cover of $G$ can be found in time 
$1.28^{\vc(G)} \cdot n^{\cO(1)}$~\cite{ChenKX10}). 
 It is   interesting to compare Theorem~\ref{vc:theorem_homs} with existing results on variants of graph homomorphism parameterized by the vertex cover and the treewidth of an input graph. 
 The techniques of obtaining lower bounds developed by Lokshtanov, Marx, and Saurabh in 
 \cite{LokshtanovMS11-superexp}, can be used to show that \textsc{Coloring} 
 cannot be computed in time $2^{o(\vc(G)\log{\vc(G)})}$, unless ETH fails \cite{DLprivat14}. However, the question if  coloring in $h$ colors  of a given graph $G$ can be done  in time  
  $h^{o(\vc(G))}$ remains open.  Another work related to  Theorem~\ref{vc:theorem_homs} is the paper of Marx 
\cite{Marx10} providing lower bounds on the running time of algorithms for ``left-homomorphisms" on classes of structures of bounded treewidth.

\medskip

As a byproduct of our proof of Theorem~\ref{main:theorem_homs}, we obtain   similar lower bounds for  locally injective graph homomorphisms. 
A homomorphism $f \colon  G \to H$ is called \emph{locally injective} if for every vertex 
$u \in V (G)$, its neighborhood is mapped injectively into the neighborhood of $f (u)$ in $H$, i.e., if every two vertices with a common neighbor in $G$ are mapped onto distinct vertices in $H$. 
As graph homomorphism generalizes graph coloring, locally injective graph homomrohism can be seen as a generalization of graph 
 distance constrained labelings. An $L(2, 1)$-labeling of a graph $G$ is a mapping from $V(G)$ into the nonnegative integers such that the 
labels assigned to vertices at distance $2$ are different 
 while labels 
 assigned to adjacent vertices differ by at least $2$. This problem was studied intensively in combinatorics and algorithms, see e.g. Griggs and Yeh \cite{Griggs:1992uq} or  Fiala et al.
 \cite{FialaGK08}. 
 Fiala and  Kratochv\'{\i}l suggested the following generalization of $L(2, 1)$-labeling, we refer  \cite{FialaK08} for the survey. For graphs $G$ and $H$, an $H(2,1)$-labeling is a mapping $f : V(G)\to V(H)$ such that for every pair of distinct adjacent vertices $u,v\in V(G)$, images  $f(u)$ $f(v)$ are distinct and nonadjacent in $H$. Moreover, if the distance between $u$ and $v$ in $G$ is two, then $f(u)\neq f(v)$. It is easy to see that a graph $G$ has an $L(2,1)$-labeling  with maximum label at most $k$ if and only if there is an $H(2,1)$-labeling for $H$ being a $k$-vertex path. Then the following is known, see for example \cite{FialaK08}, there is an $H(2,1)$-labeling of a graph $G$ if and only if there is a locally injective homomorphism from  $G$ to the complement of $H$.
 
 Several single-exponential algorithms for  $L(2,1)$-labeling can be found in the literature, the most recent algorithm is due to  Junosza{-}Szaniawski et al.    \cite{Junosza-SzaniawskiKLRR13} which runs in time $\cO(2.6488^n)$. For $H(2,1)$-labeling, or equivalently for locally injective homomorphisms,   single-exponential algorithms were known only for special cases when the maximum degree of $H$ is bounded  \cite{HavetKKKL11} or  when the bandwidth of the complement of $H$ is bounded \cite{Rzazewski14}. The following theorem explains why no such algorithms were found for arbitrary graph $H$.

\newsavebox{\boxlocalbound}
\sbox\boxlocalbound{\parbox{\textwidth}{
\begin{theoremstar}\label{main:theorem_local_homs}
 Unless ETH fails, for any constant $d>0$ there exists a constant $c=c(d)>0$ such that
for any 
function
$3\le h(n)\le n^d$,
there is no algorithm deciding if there is a  locally injective homomorphism from  an $n$-vertex graph $G$ and $h(n)$-vertex graph $H$ in time 
\( \cOs\left(2^{\frac{cn\log{h(n)}}{\log\log{h(n)}}} \right) \,. \)
\end{theoremstar}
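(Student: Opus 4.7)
The plan is to establish Theorem~\ref{main:theorem_local_homs} by a minor adaptation of the construction used for Theorem~\ref{main:theorem_homs}, so that every homomorphism between the produced instances is automatically locally injective. Once this is done, the lower bound for locally injective HOM follows directly from the lower bound for HOM, because an algorithm for locally injective HOM immediately solves HOM on instances where the two notions coincide, and the hardness parameter $n\log h/\log\log h$ is preserved up to constants.

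The reduction underlying Theorem~\ref{main:theorem_homs} partitions $V(G)$ into \emph{blocks}, each corresponding to a variable of an auxiliary 3-CSP over a domain encoded by $V(H)$, with constraint gadgets connecting blocks whose variables share a clause. I would attach to each block in $G$, and symmetrically in $H$, a small \emph{fingerprint gadget} (using $\cO(1)$ or at worst $\cO(\log h)$ additional vertices per block) with the property that in any homomorphism: (a)~the images of vertices within a single block are pairwise distinct, and (b)~the images of vertices belonging to two blocks joined by a constraint gadget are pairwise distinct. Since every pair of $G$-vertices with a common neighbor falls into one of these two categories, the fingerprints force local injectivity on every neighborhood while leaving the constraint-encoding part of the reduction untouched. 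Conversely, a locally injective homomorphism is in particular a homomorphism, and after ignoring the fingerprint vertices it still encodes a satisfying assignment of the 3-CSP.

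The main obstacle is designing the fingerprint gadgets so that the blow-up in $|V(G)|$ and $|V(H)|$ is absorbed by a constant-factor change in $n\log h/\log\log h$, and so that no spurious homomorphisms are introduced. As a safer generic alternative, one can pass through the lexicographic product $H' := H[K_t]$ with $t = \Theta(\Delta(G)^2)$: any homomorphism $G \to H$ lifts to a locally injective homomorphism $G \to H'$ by coupling it with a proper coloring of the graph whose edges join $G$-vertices sharing a common neighbor, and conversely any locally injective homomorphism to $H'$ projects to a homomorphism to $H$ by forgetting the second coordinate. This route succeeds whenever $\Delta(G)$ is at most polylogarithmic in $h$, a condition that can be enforced by a standard degree-reduction preprocessing on $G$ without affecting the hardness parameter by more than constants.
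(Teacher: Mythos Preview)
Your high-level strategy—arrange that every homomorphism between the reduced instances is automatically locally injective, so that an algorithm for locally injective HOM solves the HOM instances and inherits the lower bound—is exactly what the paper does. However, the paper achieves this far more cheaply than you propose: no new gadgets are needed. In Lemma~\ref{lemma:3coltolisthom} the labels on buckets are chosen as a proper coloring of the \emph{square} of $G'$; hence any two buckets sharing a common neighbor carry distinct labels, and since the lists force each bucket to map to a configuration carrying its own label, no two neighbors of any bucket can have the same image. Thus every list homomorphism produced by that lemma is already locally injective. For Lemma~\ref{lemma:lhomtohom} the paper observes that the auxiliary pieces $T_{h,t+3}$ and $A_h$ are mapped essentially identically by any homomorphism, so local injectivity on them is automatic, and what remains is exactly local injectivity of the $G\to H$ part. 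The calculations of Theorem~\ref{main:theorem_homs} then go through verbatim.

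Your fingerprint-gadget idea is too schematic to assess, but your ``safer generic alternative'' via the lexicographic product $H'=H[K_t]$ has a genuine gap in the backward direction. You claim that a locally injective homomorphism $\rho\colon G\to H[K_t]$ projects to a homomorphism $G\to H$ by forgetting the second coordinate. This is false: if $\{x,y\}\in E(G)$ then $\rho(x)\sim\rho(y)$ in $H[K_t]$ can hold because $\rho_1(x)=\rho_1(y)$ and $\rho_2(x)\neq\rho_2(y)$, in which case the projection sends the edge $\{x,y\}$ to a single vertex of $H$. Local injectivity does not prevent this, since adjacent vertices need not share a common neighbor. Concretely, take $G=K_2$, $H=K_1$, $t=2$: then $H[K_2]=K_2$, the identity $G\to H[K_2]$ is a (trivially) locally injective homomorphism, yet $G\not\to H$. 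The degree-reduction remark does not help; the failure is structural, not a size issue. So this route does not give you the equivalence you need.
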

 }}
\noindent\usebox{\boxlocalbound}

 \medskip

  To establish lower bounds for graph homomorhisms,   we proceed in two steps. 
 First we obtain lower bounds for   \textsc{List Graph Homomorphism} by reducing it to the $3$-coloring problem on graphs of bounded degree. More precisely, for a given graph $G$ with vertices of small degrees, we construct an instance $(G',H')$ 
 of  \textsc{List Graph Homomorphism}, such that $G$ is $3$-colorable if and only if there exists a list homomorphism from $G'$ to $H'$. Moreover, our construction guarantees that a ``fast" algorithm for list homomorphism parameterized by the number of vertices, size of a vertex cover or the chromatic number, implies an algorithm for $3$-coloring violating ETH.
 The reduction is based on a  ``grouping" technique, however, to do the required grouping we need a trick exploiting the condition that $G$ has a bounded maximum vertex degree and thus can be colored in a bounded number of colors in polynomial time. In the second step of reductions we proceed from list homomorphisms to normal homomorphisms. Here we need specific  gadgets  with a property that 
  any homomorphism from such a graph to itself preserves an order
  of its specific structures.

%
%
  The remaining part of the paper is organized as follows.
  In Section~\ref{sec:prelimiaries} we give all the necessary definitions.
  Section~\ref{sec:reductions} contains all the necessary reductions which are used to prove lower bounds for the \textsc{Graph Homomorphism} problem in Section~\ref{sec:lowerbounds}. 


 \section{Preliminaries}\label{sec:prelimiaries}

\paragraph{Graphs}
We consider simple undirected graphs, where $V(G)$ denotes the set
of vertices and $E(G)$ denotes the set of edges of a graph $G$.
For a given subset $S$ of $V(G)$, $G[S]$ denotes the subgraph of $G$ induced by $S$,
and $G-S$ denotes the graph $G[V(G)\setminus S]$.
A vertex set $S$ of $G$ is an {\em independent set} if $G[S]$ is a graph
with no edges, and
$S$ is a {\em clique} if $G[S]$ is a complete graph.
The set of neighbors of a vertex $v$ in $G$ is denoted by $N_G(v)$, and the
set of neighbors of a vertex set $S$ is $N_G(S) = \bigcup_{v \in S}N_G(v)
\setminus S$. By $N_G[S]$ we denote the closed neighborhood of the set $S$, i.e., the set $S$ together with all its neighbors: $N_G[S]=S \cup N_G(S)$. For an integer $n$, we use $[n]$ to denote the set of integers $\{1,\dots, n\}$. 

The complete graph on $k$ vertices is denoted by $K_k$.
A {\em coloring} of a graph $G$ is a function assigning a color to each
vertex of $G$ such that adjacent vertices have different colors.
A $k$-coloring of a graph uses at most $k$ colors, and the \emph{chromatic number}   $\chi (G)$ is the smallest number
of colors in a coloring of $G$. 
By Brook's theorem, 
for  any connected  graph $G$ with maximum degree $\Delta>2$,  the chromatic number of $G$ is at most $\Delta$ unless $G$ is a complete graph, in which case the chromatic number is $\Delta + 1$. Moreover, 
a $(\Delta +1)$-coloring of a  graph can be found in polynomial time by a straightforward
 greedy algorithm. 
 
Throughout the paper we implicitly assume that there is a total order on the set of vertices of a given graph. This allows us to treat a $k$-coloring of a $n$-vertex graph simply as a vector in~$[k]^n$.

A~set $S\subseteq V(G)$ is a vertex cover of $G$, if for every edge of $G$ at least one of its endpoints belongs to~$S$.

Let $G$ be an $n$-vertex graph, $1 \le r \le n$ be an integer, and $V(G)=B_1 \sqcup B_2 \sqcup \ldots \sqcup B_{\lceil \frac nr \rceil}$ be a partition of the set of vertices of $G$ into sets of size $r$ with the last set possibly having less than $r$ vertices. Then {\em an edge preserving $r$-grouping} is a graph $G_r$
with vertices $B_1, \ldots, B_{\lceil \frac nr \rceil}$ such that $B_i$ and $B_j$
are adjacent if and only if there exist $u \in B_i$ and $v \in B_j$ such that $\{u,v\} \in E(G)$. To distinguish vertices of the graphs $G$ and $G_r$, the vertices of $G_r$
will be called {\em buckets}.  

For a graph $G$, its {\em square $G^2$} has the same set of vertices as $G$
and $\{u,v\} \in E(G^2)$ if and only if there is a path of length at most $2$ between $u$
and $v$ in $G$ (thus, $E(G) \subseteq E(G^2)$). It is easy to see that if the degree of $G$ is less than $\Delta$ then the degree of $G^2$ is less than $\Delta^2$ and hence a $\Delta^2$-coloring of $G^2$ can be easily found.


 \paragraph{Homomorphisms and list homomorphisms}
 Let $G$ and $H$ be  graphs.  A mapping 
 $\varphi : V(G)\to V(H)$ is a \emph{homomorphism} if for every edge $\{u,v\}\in E(G)$ its image $\{\varphi(u),\varphi(v)\}\in E(H)$.
 If there exists a homomorphism from $G$ to $H$, we often write $G\to H$.
  The \textsc{ Graph Homomorphism} problem  HOM$(G,H)$
  asks whether or not  $G\to H$.

  Assume that for each vertex $v$ of   $G$   we are given a list $\cL(v) \subseteq V (H)$. A \emph{list homomorphism} of $G$ to $H$, also known as  a list $H$-colouring of $G$, with respect to the lists $\cL$, is a homomorphism  $\varphi : V(G)\to V(H)$, such that $\varphi(v) \in \cL(v)$ for all $v\in V (G)$. 
The \textsc{List Graph Homomorphism} problem LIST-HOM$(G,H)$
  asks whether or not  graph $G$ with lists $\cL$ admits a list homomorphism to $H $ with respect to $\cL$.
 
\paragraph{Exponential Time Hypothesis} 
Our lower bounds are based on a well-known complexity hypothesis formulated by  Impagliazzo, Paturi, and Zane   \cite{ImpagliazzoPZ01}.
 
\begin{quote}
\textbf{Exponential Time Hypothesis (ETH)}:  There is a constant $s>0$ such that 3-CNF-SAT with $n$ variables and $m$ clauses cannot be solved in time $2^{sn}(n+m)^{\cO(1)}$.
\end{quote}

This hypothesis is widely applied in the theory of exact exponential algorithms, we refer to \cite{CFKLMPPS2014,LMS2013} for an overview of ETH and its implications. 
 
In our paper we are using the following application  of ETH with respect to \textsc{$3$-Coloring}.
The \textsc{$3$-Coloring} problem is the problem to decide whether the given graph can be properly colored in $3$ colors.

\begin{proposition}[Theorem~$3.2$ in~\cite{LMS2013}, and Exercise $7.27$ in~\cite{S2005}]
\label{prop:col}
Unless ETH fails, there exists a constant $\alpha>0$ such that  \textsc{$3$-Coloring} on $n$-vertex graphs of average degree four cannot be solved in time $\cOs\left(2^{\alpha n} \right)$.
\end{proposition}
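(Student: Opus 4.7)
The plan is to derive Proposition~\ref{prop:col} by combining the Sparsification Lemma of Impagliazzo, Paturi and Zane with the classical textbook reduction from $3$-CNF-SAT to \textsc{$3$-Coloring}, followed by a small padding step to enforce the bound on the average degree. The whole argument is a composition of well-known ingredients; the proposition is stated as ``folklore-plus-citations'' precisely because each ingredient lives in a different paper.

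First I would invoke the Sparsification Lemma to upgrade ETH from arbitrary $3$-CNF-SAT to \emph{sparse} $3$-CNF-SAT: under ETH there exists $s'>0$ such that $3$-CNF-SAT on $n$ variables with $m=\cO(n)$ clauses cannot be solved in time $2^{s'n}$. This sparsification step is essential, because any linear-size reduction to \textsc{$3$-Coloring} would otherwise blow the parameter up by an $m/n$ factor and destroy the single-exponential lower bound in the target measure (the number of vertices of the output graph).

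Second I would apply the textbook polynomial-time reduction from $3$-CNF-SAT to \textsc{$3$-Coloring}. The construction uses a fixed ``palette'' triangle on vertices $T,F,B$, a pair of adjacent literal vertices $x_i,\bar x_i$ both adjacent to $B$ for each variable $x_i$, and a constant-size OR-gadget (the classical six-vertex gadget) for each clause, attached to the three literal vertices of that clause and to $T,F$. The resulting graph $G'$ is $3$-colorable if and only if the input formula is satisfiable, has $N=\cO(n+m)$ vertices and $\cO(n+m)$ edges, and every vertex has degree bounded by an absolute constant. Applied to sparse instances we obtain $N=\cO(n)$ with bounded maximum degree and hence bounded average degree.

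Third I would chain the lower bounds: a hypothetical algorithm running in $\cOs\bigl(2^{\alpha N}\bigr)$ for \textsc{$3$-Coloring} on such sparse outputs, composed with the reduction, would decide sparse $3$-CNF-SAT in time $\cOs\bigl(2^{\alpha c n}\bigr)$, where $c$ is the linear blow-up of the reduction; picking $\alpha<s'/c$ contradicts the sparsified ETH. To enforce ``average degree four'' on the nose (or at most four), I would pad $G'$ with a linear number of disjoint $3$-colorable auxiliary components tuned to pull the average degree to the desired value: adding isolated vertices lowers it, adding disjoint copies of $K_{4,4}$ (or any $3$-colorable $4$-regular bipartite chunk) raises it. Such padding multiplies $N$ by a constant and does not affect $3$-colorability, so the $2^{\Omega(N)}$ lower bound survives the reparametrization.

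The only non-routine ingredient is the Sparsification Lemma, which is exactly why \cite{LMS2013,S2005} cite rather than reprove it. The main ``obstacle'' is therefore not conceptual but purely calibrational: making sure the reduction's blow-up is genuinely linear (which forces sparsity of the SAT instance, hence the use of sparsification) and that the final padding is both $3$-colorable and correctly scales the edge-to-vertex ratio. Everything else is bookkeeping.
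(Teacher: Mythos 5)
Note first that the paper does not prove Proposition~\ref{prop:col} at all: it is imported verbatim from \cite{LMS2013,S2005}, and the only thing the paper itself derives from it is Lemma~\ref{lemma:3col} (the maximum-degree-four version, via the classical degree-reduction of \cite{GareyJ79}). Your reconstruction --- Sparsification Lemma to make the $3$-CNF instance linear in size, then the textbook palette-plus-OR-gadget reduction to \textsc{$3$-Coloring}, then linear padding and a final calibration of constants --- is exactly the standard proof found in the cited sources, and as an argument for the statement as written (average degree) it is correct.

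Two small corrections. First, your claim that in the textbook reduction ``every vertex has degree bounded by an absolute constant'' is false: the palette vertices $T,F,B$ are adjacent to all $2n$ literal vertices and to a vertex of every clause gadget, so they have degree $\Theta(n+m)$. This does not damage your proof, because bounded \emph{average} degree follows directly from the $\cO(n+m)$ edge count regardless of individual degrees --- but it is precisely why the paper needs the separate degree-reduction step to pass from Proposition~\ref{prop:col} to Lemma~\ref{lemma:3col}, so the distinction is not cosmetic here. Second, padding with disjoint copies of $K_{4,4}$ only moves the average degree \emph{towards} four (it is a weighted average of the current value and $4$), so it can never cross or exactly reach four from below; if one insists on average degree exactly four rather than at most four, one needs $3$-colorable components of average degree strictly greater than four (e.g.\ $K_{5,5}$) combined with isolated vertices to solve the resulting linear equation in integers. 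Both points are bookkeeping, as you say, but the first one is worth getting right since the max-degree version is the one the paper actually uses downstream.
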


It is well known that  \textsc{$3$-Coloring} remains NP-complete on graphs of maximum vertex degree four. Moreover, the classical reduction, see e.g. \cite{GareyJ79}, allows for a given $n$-vertex graph $G$ to construct a graph $G'$ with maximum vertex degree at most four  and   $|V(G')|=\cO(|E(G)|)$ such that $G$ is $3$-colorable if and only if $G'$ is.  Thus Proposition~\ref{prop:col} implies the following (folklore) lemma which will be used in our proofs. 
 
\begin{lemma}\label{lemma:3col}
\label{thm:3col}
Unless ETH fails, there exists a constant $\beta >0$  such that there is no algorithm solving \textsc{$3$-Coloring} on $n$-vertex  graphs of maximum degree four in time $\cOs\left(2^{\beta n} \right)$.
\end{lemma}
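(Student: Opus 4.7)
The plan is to combine Proposition~\ref{prop:col} with a classical degree-reduction construction: Proposition~\ref{prop:col} rules out a $2^{\alpha n}$-time algorithm on $n$-vertex graphs of average degree four, and I want to transfer this hardness to graphs of maximum degree four. Since an $n$-vertex graph of average degree four has exactly $|E(G)|=2n$ edges, any reduction that produces an output graph $G'$ with $|V(G')|=\cO(|E(G)|)$ will only inflate the parameter by a constant factor, which can then be absorbed into the ETH constant.

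Concretely, starting from an $n$-vertex graph $G$ of average degree four, I would invoke the standard Garey--Johnson style gadget (see \cite{GareyJ79}): every vertex $v$ of degree $d\ge 5$ is replaced by a collection of $d$ auxiliary copies $v_1,\ldots,v_d$, each of degree at most three, arranged so that an ``equality gadget'' forces $v_1,\ldots,v_d$ to take the same color in any proper $3$-coloring. Each original neighbor of $v$ is then attached to exactly one copy $v_i$, so its contribution to the degree budget at the $v$-side is one. The equality gadget can be realized, for example, by short paths of degree-$3$ vertices that propagate a color along the cycle/tree of copies. The resulting graph $G'$ has maximum degree at most four, $G$ is $3$-colorable iff $G'$ is, and the number of new vertices introduced per original vertex $v$ is linear in $\deg_G(v)$, hence
\[
|V(G')|\ \le\ c\cdot \sum_{v\in V(G)}\deg_G(v)\ =\ 2c\cdot |E(G)|\ =\ 4c\cdot n
\]
for an absolute constant $c$ depending only on the gadget.

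To conclude, suppose for contradiction that for every $\beta>0$ there is an algorithm deciding $3$-coloring on graphs of maximum degree four in time $\cOs(2^{\beta|V(G')|})$. Applied to $G'$ this algorithm decides $3$-colorability of $G$ in time $\cOs(2^{4c\beta n})$. Taking $\beta:=\alpha/(8c)$, where $\alpha$ is the constant supplied by Proposition~\ref{prop:col}, we obtain a $\cOs(2^{\alpha n/2})$-time algorithm for $3$-coloring on $n$-vertex graphs of average degree four, contradicting Proposition~\ref{prop:col} and hence ETH.

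The only technical point that requires care is the equality gadget: one must exhibit a bounded-size, maximum-degree-$3$ graph with two designated terminals that are forced to receive equal colors in every proper $3$-coloring, and in which every proper $3$-coloring of the terminals can be extended to the gadget. This is where the folklore construction does the work, but since several such gadgets are well known, the main task is bookkeeping rather than invention, which is why the lemma is stated as folklore.
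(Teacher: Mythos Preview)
Your proposal is correct and follows essentially the same approach as the paper: the paper's ``proof'' is just the one-sentence observation that the classical Garey--Johnson degree-reduction gadget turns an $n$-vertex graph $G$ into a $3$-coloring-equivalent graph $G'$ of maximum degree four with $|V(G')|=\cO(|E(G)|)$, so that Proposition~\ref{prop:col} (which applies to average-degree-four graphs, hence $|E(G)|=2n$) transfers directly. You have simply fleshed out that sentence with the explicit constant-tracking and a sketch of the gadget, which is exactly what the folklore argument amounts to.
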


\section{Reductions}
\label{sec:reductions}

This section constitutes the main technical part of the paper and contains all the necessary reductions used in the lower bounds proofs. Using these reductions as building
blocks the lower bounds follow from careful calculations. The general pipeline is as follows. To prove a lower bound with respect to  a given graph complexity measure we take a graph $G$ of maximum degree four that needs to be $3$-colored and construct an equisatisfiable instance $(G',H')$
of LIST-HOM (using Lemma~\ref{lemma:3coltolisthom} or Lemma~\ref{lemma:3coltolhomvc}). We then use Lemma~\ref{lemma:lhomtohom} to transform $(G',H')$ into an equisatisfiable instance $(G'',H'')$
of HOM. 
Thus, an algorithm checking whether there exists a homomorphism from $G''$ to $H''$ can be used to check whether the initial graph $G$ can be $3$-colored. At the same time we know a lower bound for \textsc{$3$-Coloring}
under ETH (Lemma~\ref{lemma:3col}). This gives us a lower bound for HOM.
We emphasize that our reductions 
 provide almost tight lower bounds for HOM under ETH.




\newsavebox{\boxlemmatwo}
\sbox\boxlemmatwo{\parbox{\textwidth}{
\begin{lemma}[\textsc{3-Coloring}$(G)$ $\to$ LIST-HOM$(G',H')$ with small $|V(G')|$] 
There exists an algorithm that given an $n$-vertex graph $G$
of maximum degree four and an integer $2 \le r \le n$ constructs an instance $(G',H')$ of LIST-HOM such that 
\( |V(G')|=\lceil n/r \rceil \)  and  \(|V(H')| \le r^{50 r} \) which is satisfiable if and only if the initial graph $G$ is $3$-colorable. The running time of the algorithm is polynomial
in $n$ and the size of the output graphs.
\label{lemma:3coltolisthom}
\end{lemma}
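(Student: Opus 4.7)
The plan is to set $G' = G_r$, the edge-preserving $r$-grouping of $G$, and to construct $H'$ together with lists so that proper $3$-colorings of $G$ correspond exactly to list-homomorphisms $G' \to H'$. The central obstacle is that $H'$ is a single fixed graph but the compatibility of two buckets' colorings depends on the specific $G$-edges joining them; I would resolve this by equipping each state with a local \emph{signature} chosen so that every ``external-edge token'' carried by a bucket can unambiguously identify its intended neighbor bucket.

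Concretely, I partition $V(G) = B_1 \sqcup \cdots \sqcup B_g$ into buckets of size at most $r$ (so $g = \lceil n/r \rceil$) and enumerate the vertices of $B_i$ as $v_{i,1}, \ldots, v_{i,|B_i|}$, calling the index $j$ the \emph{slot} of $v_{i,j}$. Since $\Delta(G) \le 4$ forces $\Delta(G_r) \le 4r$ and hence $\Delta(G_r^2) \le 16r^2$, a greedy algorithm produces a proper coloring $\chi\colon V(G_r) \to [16r^2 + 1]$ in polynomial time with the key property that any two distinct buckets at $G_r$-distance at most two receive different signatures. Then I take $V(H')$ to be all triples $(c, s, X)$ with $c \in [3]^r$, $s \in [16r^2 + 1]$, and $X \subseteq [r] \times [16r^2 + 1] \times [r]$ a set of at most $4r$ \emph{tokens}, a token $(j, s', k)$ meaning ``my slot $j$ has an edge to slot $k$ of a bucket with signature $s'$''. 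Declare $(c, s, X) \sim_{H'} (c', s', X')$ iff for every $(j, s', k) \in X$ one has $(k, s, j) \in X'$ and $c(j) \ne c'(k)$, together with the symmetric condition on $X'$. Finally, set $\cL(B_i) = \{(c, \chi(B_i), X_{B_i}) : c \text{ is a proper coloring of } G[B_i]\}$, where $X_{B_i}$ is the fixed token set determined by the actual external edges of $B_i$.

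I would then verify both directions of the equivalence. Given a proper $3$-coloring $\kappa$ of $G$, set $\varphi(B_i) = (\kappa|_{B_i}, \chi(B_i), X_{B_i})$; for each edge $B_i \sim_{G'} B_j$ the $G_r^2$-property ensures that $B_j$ is the \emph{unique} neighbor of $B_i$ with signature $\chi(B_j)$, so the tokens in $X_{B_i}$ pointing at $\chi(B_j)$ correspond exactly to $B_i$--$B_j$ edges, the reverse tokens sit in $X_{B_j}$, and the color inequalities follow from $\kappa$ being proper. Conversely, any list-homomorphism $\varphi$ must send $B_i$ to some triple $(c_i, \chi(B_i), X_{B_i})$ (the last two coordinates are forced by the list), and defining $\kappa(v_{i,j}) := c_i(j)$ yields a proper $3$-coloring: internal edges are handled by the list condition, while every external edge $\{v_{i,j}, v_{i',j'}\}$ appears in $X_{B_i}$ as a token that the $H'$-adjacency condition forces to satisfy $c_i(j) \ne c_{i'}(j')$.

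A routine count gives $|V(H')| \le 3^r \cdot (16r^2 + 1) \cdot (17r^4)^{4r} \le r^{50r}$ for all $r \ge 2$, and the whole construction can be carried out in time polynomial in $n$ and in the size of the output. The main technical obstacle, as highlighted above, is choosing signatures so that token identifiers do not collide among bucket pairs at $G_r$-distance at most two; the coloring of $G_r^2$ delivers exactly this, and crucially relies on the hypothesis $\Delta(G) \le 4$ that makes a polynomial-time bounded-color coloring of $G_r^2$ available.
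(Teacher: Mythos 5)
Your proposal is correct and follows essentially the same route as the paper: the same edge-preserving $r$-grouping for $G'$, the same proper coloring of $G_r^2$ to give each bucket a label that uniquely identifies its neighbors, and the same configuration-style vertex set for $H'$ with lists pinning each bucket to configurations carrying its own label and external-edge data. The only (harmless) variation is that the paper's configurations additionally record the colors of the external neighbors and adjacency checks consistency of these records, whereas you omit them and enforce the inequality $c(j)\neq c'(k)$ directly in the adjacency relation of $H'$.
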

}}
\noindent\usebox{\boxlemmatwo}
\begin{proof}

{\em Constructing $G'$.}
Partition the vertices of $G$ into sets of size $r$ 
(this is possible since $r \le n$) 
arbitrarily and let $G'=G_r$ be
an edge preserving $r$-grouping of $G$ with respect to this partition.
%
%
The maximum vertex degree in graph $G'$ does not exceed $4r$, hence its square can be properly colored with at most $L=16r^2+1$ colors. Fix any such coloring and denote by $\ell(B)$ the color of a bucket $B \in V(G')$.
To distinguish this coloring from a $3$-coloring of $G$ that we are looking for,
in the following we call $\ell(B)$ a {\em label} of~$B$. An important property of this labelling 
is that all the neighbors of any bucket have different labels. Thus to specify a neighbor of a given bucket $B$ it is sufficient to specify the label of this neighbor. 
 This will be crucial for the construction of the graph $H'$ given below.

{\em Constructing $H'$.}
The graph $H'$ is constructed as follows. Roughly, it contains all possible ``configurations'' of buckets from $G'$, where a configuration of $B \in V(G')$ contains its label $\ell(B)$, a $3$-coloring of all $r$ vertices of the bucket $B \subseteq V(G)$, and a $3$-coloring of 
all the neighbors of these $r$ vertices in~$G$. We will use lists to allow mapping of a bucket $B \in V(G')$ to only those configurations that are consistent with a 
3-coloring of the closed neighborhood $N_{G}[B]$.

Formally, a configuration is a tuple
%
\[C=(\ell, c, (p_1,\ell_1,q_1,c_1), \ldots, (p_{4 r},\ell_{4 r},q_{4 r},c_{4 r})) \in [L] \times [3]^r \times ([r] \times [L] \times [r] \times [3])^{4 r} \, .\]

Thus, the number of vertices in $H'$ is equal to (recall that $r \ge 2$)
\begin{equation}\label{eq:rgamma}
  L \cdot 3^r \cdot (r^2 \cdot L \cdot 3)^{4 r} \le r^{4r}\cdot r^{2r}\cdot (r^2\cdot r^7\cdot r^2)^{4r}<r^{50r} \, .
\end{equation}

For a given bucket $B \in V(G')$ such a configuration $C$ sets the following. Integer 
$\ell\in [L]$ is a label of $B$, $c \in [3]^r$ is a 3-coloring of $B \subseteq V(G)$
(recall that we assume a fixed order on the vertices of the graph $G$ so that
the vector $c \in [3]^r$ can be uniquely decoded to a 3-coloring of~$B$).
The rest of $C$ defines a $3$-coloring of all the vertices adjacent to $B$ in~$G$ as follows. Let $\{u_1, v_1\}, \ldots, \{u_k,v_k\} \in E(G)$ be all the edges in the lexicographic order such that $u_i \in B$ and $v_i\not\in B$  for all $i \in [k]$. Note that $k \le 4 r$ since the degree of $G$ is at most $4$ and $|B| \le r$.
Then $(p_i,\ell_i,q_i,c_i) \in [r] \times [L] \times [r] \times [3]$ defines an edge $\{v_i,w_i\}$ and a color of $v_i$ as follows: $p_i \in [r]$ is the number of $u_i$
in~$B$, $\ell_i \in [L]$ is the label of the unique possible neighbor $B'$ of $B$ in~$G'$, $q_i \in [r]$ is the number of $v_i$ in~$B'$, and $c_i \in [3]$ is the color of~$v_i$.

Two configurations $C_1=(\ell^1,c^1,\{(p_i^1,\ell_i^1,q_i^1,c_i^1)\}_{i=1}^{4r})$ and $C_2=(\ell^2,c^2,\{(p_i^2,\ell_i^2,q_i^2,c_i^2)\}_{i=1}^{4r})$ are adjacent if their colorings do not contradict each other. I.e., $C_1$ contains colors of vertices from a bucket labeled by $\ell^2$. We require them to be the same as the ones from the coloring $c^2$ (and similarly for the second configuration).
More formally, $C_1$ and $C_2$ are adjacent if
for every $i \in [4 r]$, if $\ell_i^1=\ell_2$ then $c_i^1$ is equal to the color of $q_i^1$-th vertex in the vector $c_2$, and if $\ell_i^2=\ell_1$ then $c_i^2$
is equal to the color of $q_i^2$-th vertex in $c_1$.

{\em Defining lists of allowed vertices.}
We allow to map a bucket $B \in V(G')$ to a configuration $C=(\ell,\ldots) \in V(H')$ if and only if $\ell(B)=\ell$ and $C$ defines a valid $3$-coloring of $N_G[B]$
(that is, any two adjacent vertices from $N_G[B]$ are given different colors).

{\em Correctness.}
We now show that $G$ is 3-colorable if and only if there is a list-homomorphism from $G'$ to $H'$. The forward direction is clear: given a $3$-coloring of $G$,
one can map each bucket $B$ to the configuration containing the label of this bucket and the coloring of $N_G[B]$. For the reverse direction, we take a homomorphism $\phi \colon G' \to H'$ and for each bucket $B$ we decode from $\phi(B)$ the $3$-coloring of all the vertices of $N_G[B]$. 
Note that if $N_G[B] \cup N_G[B'] \neq \emptyset$ for buckets $B,B' \in V(G')$,
then $\{B,B'\} \in E(G')$. In this case, the edges of $H'$ guarantee that $\phi(B)$
and $\phi(B')$ assign the same color to each vertex in $N_G[B] \cup N_G[B']$.
Hence such a decoding of a $3$-coloring from the homomorphism $\phi$
is well defined. The list constraints of the LIST-HOM instance further guarantee that the resulting $3$-coloring is valid.

{\em Running time of the reduction.}
Clearly, the algorithm takes time polynomial in $n$ and the 
size of the graphs $G'$ and~$H'$.
\end{proof}

\newsavebox{\boxlemmathree}
\sbox\boxlemmathree{\parbox{\textwidth}{
\begin{lemmastar}[\textsc{3-Coloring}$(G)$ $\to$ LIST-HOM$(G',H')$ with small $\vc(G')$]
There exists an algorithm that given an $n$-vertex graph $G$
of maximum degree $4$ and an integer $2 \le r \le n$ constructs an instance $(G',H')$ of LIST-HOM such that 
\( \vc(G')=\lceil n/r \rceil \)  and  \( |V(H')| \le 300^{r} \) which is satisfiable if and only if the initial graph $G$ is $3$-colorable. The running time of the algorithm is polynomial
in $n$ and the size of the output graphs.
\label{lemma:3coltolhomvc}
\end{lemmastar}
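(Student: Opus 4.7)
The plan is to refine the construction of Lemma~\ref{lemma:3coltolisthom} so that the bucket vertices now form a small \emph{vertex cover} of $G'$ rather than being all of $V(G')$, and so that the bookkeeping done in $H'$ per bucket drops from ``a colouring of all $O(r)$ neighbouring buckets'' to a single $3$-colouring of the bucket itself. The first change gives $\vc(G')\le\lceil n/r\rceil$; the second brings $|V(H')|$ down from $r^{O(r)}$ to simply-exponential in $r$.

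I would build $G'$ as follows. Partition $V(G)$ arbitrarily into $k=\lceil n/r\rceil$ buckets $B_1,\dots,B_k$ of size at most $r$. Introduce one \emph{bucket vertex} $b_i$ per bucket, and for every edge $\{u,v\}\in E(G)$ whose endpoints lie in two distinct buckets $B_i,B_j$ introduce an \emph{edge-gadget vertex} $x_{uv}$ whose only neighbours in $G'$ are $b_i$ and $b_j$. Edges of $G$ with both endpoints inside the same bucket are not represented in $G'$; they will be enforced through the list of that bucket vertex. The $b_i$'s clearly cover every edge of $G'$, giving $\vc(G')\le k$; equality (if desired) can be arranged by attaching a pendant vertex with a trivial single-element list to each $b_i$.

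Because each $b_i$ has at most $4r$ edge-gadget neighbours and each edge-gadget vertex has exactly two neighbours, a short counting argument shows $\Delta(G'^{2})\le 8r$, so a greedy algorithm produces in polynomial time a proper labeling $\ell\colon V(G')\to[L]$ of $G'^{2}$ with $L=8r+1=O(r)$. The vertex set of $H'$ consists of two disjoint families: \emph{bucket configurations} $(\ell,c)\in[L]\times[3]^r$ (at most $L\cdot 3^r$ of them) and \emph{edge configurations} $(\ell_1,p_1,c_1,\ell_2,p_2,c_2)\in([L]\times[r]\times[3])^2$ with $c_1\ne c_2$ (at most $6L^2 r^2$ of them). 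Since $L=O(r)$, a short arithmetic check gives $L\cdot 3^r+6L^2 r^2\le 300^r$ for every $r\ge 2$. An edge of $H'$ joins a bucket configuration $(\ell,c)$ to an edge configuration $(\ell_1,p_1,c_1,\ell_2,p_2,c_2)$ iff for both $t\in\{1,2\}$ the implication ``$\ell=\ell_t\Rightarrow c_{p_t}=c_t$'' holds.

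I would set the list of $b_i$ to be all $(\ell(b_i),c)$ for which $c$ is a proper $3$-colouring of $G[B_i]$, and the list of $x_{uv}$ (where $u$ is at position $p$ in $B_i$ and $v$ at position $q$ in $B_j$) to be all $(\ell(b_i),p,c_u,\ell(b_j),q,c_v)$ with $c_u\ne c_v$. For correctness, a $3$-colouring $\chi$ of $G$ yields the homomorphism $b_i\mapsto(\ell(b_i),\chi|_{B_i})$ and $x_{uv}\mapsto(\ell(b_i),p,\chi(u),\ell(b_j),q,\chi(v))$, which respects every list and every edge, using the properness of $\ell$ on $G'^{2}$ to ensure $\ell(b_i)\ne\ell(b_j)$ so that the ``wrong'' implication in the adjacency rule of $H'$ is vacuous. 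Conversely, any list homomorphism $\phi$ reads off a $3$-colouring of each $B_i$ from the second coordinate of $\phi(b_i)$, and the adjacency constraints at each $x_{uv}$ together with $\ell(b_i)\ne\ell(b_j)$ force those colourings to assign distinct colours to $u$ and $v$. The main obstacle is not the logical structure of the gadget but rather the quantitative bound: keeping $L$ genuinely $O(r)$ via a careful degree analysis of $G'^{2}$ and then verifying that the polynomial prefactors $L\cdot 3^r$ and $6L^2 r^2$ both fit below $300^r$ all the way down to the tight case $r=2$.
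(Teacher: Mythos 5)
Your proof is correct and follows essentially the same route as the paper: make $G'$ bipartite with the bucket vertices on one side (so they form the vertex cover) and auxiliary vertices subdividing the cross-bucket adjacencies on the other, then let $H'$ consist of ``bucket configurations'' $(\ell,c)$ on the left and consistency-checking gadget vertices on the right, with labels used to disambiguate which colouring belongs to which bucket. The one substantive difference is the granularity of the right-hand side: the paper introduces a single auxiliary vertex per pair of adjacent buckets and has its image carry the \emph{full} $3$-colourings of both buckets, giving a right part of size $L^2\cdot 3^{2r}$, whereas you introduce one gadget vertex per crossing edge of $G$ and have its image record only the positions and colours of the two endpoints, giving a right part of size $O(L^2 r^2)$, i.e.\ polynomial in $r$. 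Both satisfy the $300^r$ budget (your $H'$ is in fact smaller), your adjacency rule with the implication ``$\ell=\ell_t\Rightarrow c_{p_t}=c_t$'' correctly propagates the endpoint colours because the list of each gadget vertex pins down $\ell_1,\ell_2$ to the two (distinct) bucket labels, and your use of a colouring of $G'^2$ is a harmless strengthening of the paper's proper colouring of the bucket graph $G_r$ with $L=5r$ colours. No gap.
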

}}
\noindent\usebox{\boxlemmathree}
\begin{proof}
The proof is similar to the previous one but is simpler since now we have to guarantee that the vertex cover of $G'$ is small, but not its number of vertices. 

{\em Constructing $G'$.} 
Split the vertices of $G$ into groups of size $r$ arbitrarily. Let $G_r$ be an edge-preserving $r$-grouping with respect to this partition.
Fix a coloring of $G_r$ with $L=5r$ colors.
The graph $G'$ is obtained from $G_r$
by introducing an auxiliary vertex on each edge between two buckets. 
Note that $G'$ is a bipartite graph: 
one part consists of all the buckets while the other one contains all auxiliary vertices.
This in particular implies that its vertex cover is at most $\lceil n/r \rceil$ (the number of buckets).

{\em Constructing $H'$.} The graph $H'$ which
is also bipartite. The left part consists of all possible $L \cdot 3^r$ configurations $(l,c)$ where a configuration is a pair of a $3$-coloring $c \in [3]^r$ of $r$ vertices and a label $l \in [L]$. 
The right part contains pairs of such configurations $(l_1,c_1,l_2,c_2)$ with $l_1 \neq l_2$ and thus has size at most $L^2 \cdot 3^{2r}$. Each such vertex is adjacent to exactly two configurations on the left part~--- to the first component $(l_1,c_1)$ of the pair and to the second one $(l_2,c_2)$. The number of vertices in $H'$ is
\begin{equation}
\label{eq:verth}
L\cdot 3^r + L^2\cdot 3^{2r} = 5r3^r+25r^23^{2r} \le 300^r
\end{equation}
(since $r \ge 2$).

{\em Defining lists.}
Each bucket is allowed to be mapped 
to a configuration $(l,c)$ only if $l$ is the label of this bucket and $c$ is a proper $3$-coloring of its $r$ vertices
(i.e., each vertex of the bucket is assigned a color from its list and any two adjacent vertices are assigned different colors).
An~auxiliary vertex between buckets $B_1$ and $B_2$ of the graph $G'$ is allowed to be mapped to a vertex $(l_1,c_1,l_2,c_2)$ on the right part of $H'$ if and only if $\{l_1,l_2\}=\{l(B_1), l(B_2)\}$
and the colorings $c_1,c_2$ define a proper $3$-coloring of the corresponding $2r$ vertices in~$G$. More precisely, if $l_1$ is the label of $B_1$ then $c_1$ is thought as the coloring of $r$ vertices from $B_1$ and $c_2$ is thought as the coloring of the vertices from $B_2$; if $l_1$ is the label of $B_2$ then the other way around (recall that $l(B_1) \neq l(B_2)$ if there is an auxiliary vertex between $B_1$ and $B_2$ in~$G'$). I.e., labels allow to uniquely decode from a pair of configurations which coloring corresponds to which bucket.

{\em Correctness.}
It is not difficult to see that $G$ has a proper $3$-coloring if and only if there exists a homomorphism from $G'$ to $H'$. Indeed, a $3$-coloring of $G$ can be transformed in a natural way to a homomorphism from $G'$ to $H'$. For this, we map each
bucket of $G'$ to its corresponding configuration (the $3$-coloring of $r$ vertices of the bucket and the label of the bucket). An auxiliary vertex between two buckets $B_1$ and $B_2$ in $G'$ is mapped to the corresponding vertex on the right part of $H'$
(namely, to the vertex consisting of labels of $B_1,B_2$ and $3$-colorings of their $2r$ vertices). 

Conversely, if there is a homomorphism from $G'$ to $H'$ we can decode a $3$-coloring of $G$ from it. To show that this is a proper $3$-coloring of $G$
first note that each vertex $v \in V(G)$ is assigned a color from its list
since $v$ belongs to some bucket $B \in V(G')$ and buckets are allowed to
be mapped only to configurations containing proper colorings of its $r$ vertices. To show that each edge is properly colored consider two adjacent vertices $u,v \in V(G)$. If $u,v$ lie in the same bucket, then the edge $\{u,v\}$ is properly colored by the same reason: this bucket is mapped to a configuration 
containing a proper coloring of all its vertices. If $u \in B$ and $v \in B'$
for different buckets $B,B'$ then these two buckets are adjacent in~$G_r$.
Hence an auxiliary vertex between $B$ and $B'$ in $G'$ guarantees that
$B$ and $B'$ are mapped to configurations containing consistent colorings
of $B,B' \subseteq V(G)$. This in particular means that $\{u,v\} \in E(G)$
is colored properly. 

{\em Running time of the reduction.}
Clearly, the algorithm takes time polynomial in $n$ and the 
size of the graphs $G'$ and~$H'$.
\end{proof}

\newsavebox{\boxlemmafour}
\sbox\boxlemmafour{\parbox{\textwidth}{
\begin{lemmastar}[LIST-HOM $\to$ LIST-HOM with small $\chi(H')$]
\label{lemma:chromatic}
Given an instance $(G,H)$ of LIST-HOM and a $k$-coloring of $G$
one can construct in polynomial time a graph $H'$ such that $\chi(H') \le k$,
$|V(H')|=k|V(H)|$, and $(G,H)$ is equisatisfiable to $(G,H')$.
\end{lemmastar}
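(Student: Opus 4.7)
The idea is to split $H$ into $k$ colored copies, one per color class of $G$, and restrict each vertex of $G$ to be mapped only to its own copy, while forbidding adjacencies inside a copy so that the coloring of $H'$ by ``copy index'' is proper.

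Concretely, let $c \colon V(G)\to[k]$ be the given proper $k$-coloring of $G$. Define
\[ V(H')=\{v^{(i)} : v\in V(H),\ i\in[k]\}, \]
so $|V(H')|=k|V(H)|$. For the edges, put $\{v^{(i)},w^{(j)}\}\in E(H')$ if and only if $i\neq j$ and $\{v,w\}\in E(H)$. Finally, for every $u\in V(G)$ replace its list $\cL(u)\subseteq V(H)$ by
\[ \cL'(u)=\{v^{(c(u))} : v\in\cL(u)\}\subseteq V(H'). \]

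\textbf{Chromatic number.} By construction no edge of $H'$ connects two vertices with the same superscript, so the partition $V(H')=V_1\sqcup\cdots\sqcup V_k$ with $V_i=\{v^{(i)}:v\in V(H)\}$ is a proper $k$-coloring, giving $\chi(H')\le k$.

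\textbf{Equisatisfiability.} If $\varphi \colon G\to H$ is a list homomorphism with respect to $\cL$, define $\varphi'(u)=\varphi(u)^{(c(u))}$. For each $u$ we have $\varphi'(u)\in\cL'(u)$ by the definition of $\cL'$. For every edge $\{u,u'\}\in E(G)$ we have $c(u)\neq c(u')$ (since $c$ is a proper coloring of $G$) and $\{\varphi(u),\varphi(u')\}\in E(H)$, hence $\{\varphi'(u),\varphi'(u')\}\in E(H')$. Conversely, given a list homomorphism $\varphi' \colon G\to H'$ with respect to $\cL'$, the lists force $\varphi'(u)=v^{(c(u))}$ for some $v\in\cL(u)$; set $\varphi(u)=v$. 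For any edge $\{u,u'\}\in E(G)$, the edge $\{\varphi'(u),\varphi'(u')\}\in E(H')$ projects, by the definition of $E(H')$, to an edge of $H$, so $\varphi$ is a list homomorphism from $G$ to $H$ with respect to $\cL$.

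\textbf{Where difficulty lies.} There is no real obstacle: the construction is completely mechanical once one notices that a proper $k$-coloring of $G$ can be transported into $H'$ by tagging each vertex of $H$ with a copy index, and that the rule ``same index $\Rightarrow$ non-adjacent'' simultaneously yields $\chi(H')\le k$ and is compatible with the lists because adjacent vertices of $G$ automatically carry distinct indices. The running time is clearly polynomial in $|V(H)|+|V(G)|+k$.
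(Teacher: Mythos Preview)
Your proof is correct and essentially identical to the paper's own argument: both take $V(H')=V(H)\times[k]$ with edges only between distinct layers and restrict each vertex $u\in V(G)$ to the layer indexed by $c(u)$; the only difference is notational ($v^{(i)}$ versus $(v,i)$).
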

}}
\noindent\usebox{\boxlemmafour}
\begin{proof}
Let $c \colon V(G) \to [k]$ be a $k$-coloring of $G$.
The set of vertices of the graph $H'$ is $V(H) \times [k]$, i.e., each vertex of $H'$
is a pair of a vertex of $H$ and a color. Two vertices $(u,i), (v,j) \in V(H')$ are adjacent in $H'$ if and only if $(u,v) \in E(H)$ and $i \neq j$. A vertex $w \in V(G)$ is allowed to
be mapped to a vertex $(u,i) \in V(H')$ (by the list constraints of the instance $(G,H')$) if and only if $w$ is allowed to be mapped to $u \in V(H)$ (by the instance $(G,H)$)
and $c(w)=i$.

It is not difficult to see that the instances $(G,H)$ and $(G,H')$ are equisatisfiable. Indeed, if there is a homomorphism $\phi \colon V(G) \to V(H)$ then $\phi' \colon V(G) \to V(H')$ defined by $\phi'(w)=(\phi(w), c(w))$ is a homomorphism too: if $\{w_1,w_2\} \in E(G)$ then $c(w_1) \neq c(w_2)$ and $\{\phi(w_1), \phi(w_2)\} \in E(H)$ and hence
$\{\phi'(w_1), \phi'(w_2)\} \in E(H')$ (list constraints are also clearly satisfied). The reverse direction is even simpler. If $\phi' \colon V(G) \to V(H')$ is a homomorphism then
set $\phi(w)=u$ such that $\phi'(w)=(u,i)$.

Finally, note that there exists a straightforward $k$-coloring of $H'$:
a vertex $(u,i) \in V(H')$ is assigned the color $i$. 
\end{proof}

\newsavebox{\boxlemmafive}
\sbox\boxlemmafive{\parbox{\textwidth}{
\begin{lemmastar}[LIST-HOM $\to$ HOM]
\label{lemma:lhomtohom}
There is a polynomial-time algorithm that from an instance 
$(G,H)$ of LIST-HOM where $|V(G)|=n$, $|V(H)|=h$, $\chi(H)\le t$ constructs an equisatisfiable instance $(G',H')$ of HOM where $|V(G')| \le n+\Delta$, $\vc(G') \le \vc(G)+\Delta$, $|V(H')| \le \Delta$ for $\Delta=(h+1)(t+11)$, and $\chi(H') \le t+10$.
\end{lemmastar}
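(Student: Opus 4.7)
The strategy is to encode each list constraint $\mathcal{L}(v)\subseteq V(H)$ as a rigid adjacency constraint in $G'$, thereby eliminating the need for lists. The core idea is to augment $H$ with an \emph{indicator gadget} $I$ that contains, for every $u_i\in V(H)$, a distinguished \emph{anchor} vertex $a_i$ whose $H$-neighborhood is precisely $V(H)\setminus\{u_i\}$. When $I$ is rigid enough, any homomorphism $\phi':G'\to H'$ must send a fresh disjoint copy $I^\circ$ of $I$ placed inside $G'$ identity-like into $I$, pinning $a_i^\circ\mapsto a_i$; adding the edge $\{v,a_i^\circ\}$ to $G'$ then forces $\phi'(v)\in N_{H'}(a_i)=V(H)\setminus\{u_i\}$, i.e., exactly excludes $u_i$ from the image of $v$ and thereby enforces $\phi'(v)\in\mathcal{L}(v)$.

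\paragraph{Construction.}
Assemble $I$ from a common rigidifying base $\Gamma_0$ on $t+11$ vertices together with, for each $i\in[h]$, an index gadget $\Gamma_i$ of $t+10$ vertices (containing $a_i$) linked to $\Gamma_0$ by a unique signature that identifies~$i$. Define $H' := H \cup I$ so that $|V(H')|=h+h(t+10)+(t+11)=(h+1)(t+11)=\Delta$, and declare $a_i$ adjacent in $H'$ to $V(H)\setminus\{u_i\}$, with no edges among anchors. Since the anchors are pairwise non-adjacent they can share one fresh color; $H$ keeps its $t$-coloring; $\Gamma_0$ is constructed to avoid being a $(t+11)$-clique so that $\chi(\Gamma_0)\le t+10$; and the non-anchor vertices of each $\Gamma_i$ are colored within the same budget, giving $\chi(H')\le t+10$. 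Now let $G':=G\sqcup I^\circ$, where $I^\circ$ is a disjoint isomorphic copy of $I$, and add the edge $\{v,a_i^\circ\}$ for every pair $(v,u_i)$ with $u_i\notin\mathcal{L}(v)$. Then $|V(G')|\le n+\Delta$, and a vertex cover of $G'$ consisting of $V(I^\circ)$ together with any vertex cover of $G$ gives $\vc(G')\le\vc(G)+\Delta$.

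\paragraph{Correctness.}
A list homomorphism $\phi:G\to H$ extends to $\phi':G'\to H'$ by mapping $I^\circ$ identically onto $I$ (in particular $a_i^\circ\mapsto a_i$). Each new edge $\{v,a_i^\circ\}$ is mapped to $\{\phi(v),a_i\}$, which is indeed an edge of $H'$ because $\phi(v)\in\mathcal{L}(v)$ implies $\phi(v)\ne u_i$ and hence $\phi(v)\in N_{H'}(a_i)$. Conversely, any $\phi':G'\to H'$ is rigidity-forced to pin $I^\circ$ identity-like into $I$; then each auxiliary edge enforces $\phi'(v)\ne u_i$ for every $u_i\notin\mathcal{L}(v)$, so $\phi'|_{V(G)}$ is a valid list homomorphism from $G$ to $H$.

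\paragraph{Main obstacle.}
The principal difficulty is the precise engineering and verification of the rigidity of $I$: one must show that no homomorphism of the copy $I^\circ\hookrightarrow G'$ into $H'$ can re-route the anchors, even by exploiting possible non-trivial endomorphisms of $H$ itself, and that moreover no $v\in V(G)$ incident to some $a_i^\circ$ can escape to $V(I)\setminus V(H)$. The plan is to build $\Gamma_0$ as a rigid $(t+10)$-chromatic core (e.g., a $(t+10)$-clique plus an asymmetric stabiliser) whose cross-edges pin $V(H)$ pointwise under $\phi'$, to attach each $\Gamma_i$ to $\Gamma_0$ by a unique colour-coded signature indexed by $i$, and to ensure that the neighborhoods of anchors in $V(I)$ are chosen so that the $V(H)$-neighborhood pattern $V(H)\setminus\{u_i\}$ already separates the anchors once $\phi'$ is identity on $V(H)$. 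The constants $t+11$ and $t+10$ in the statement are precisely the vertex and chromatic budget this rigidification requires.
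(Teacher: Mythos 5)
Your high-level plan coincides with the paper's: introduce for each $u_i\in V(H)$ an anchor whose neighbourhood in $H'$ is exactly $V(H)\setminus\{u_i\}$, place a rigid gadget containing copies of these anchors inside $G'$, force that copy to map onto the gadget in $H'$ anchor-by-anchor, and encode the exclusion $u_i\notin\mathcal{L}(v)$ as an edge from $v$ to the $i$-th anchor copy. (In the paper the anchors are the vertices $b_i$ of a matching $A_h$, and the rigidity comes from a chain $T_{h,t+3}$ of blocks, each a $5$-cycle with an apex blown up into a $K_{t+3}$: every self-homomorphism of a block fixes the apex, the chain forces the blocks to be mapped in order, and the $(t+3)$-cliques cannot land inside $H$ because $\chi(H)\le t$.) The problem is that your writeup defers precisely this rigidity --- the ``unique signature that identifies $i$'', the claim that the anchors cannot be permuted or rerouted, and the role of $\Gamma_0$ --- to a ``plan'' in the final paragraph. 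That is not a routine verification to be filled in later; it is the entire technical content of the lemma, and without an explicit gadget and a proof that all of its homomorphic images in $H'$ are order-preserving, the argument does not exist yet.

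There is also a concrete error, independent of the missing gadget. Your $G'$ is $G\sqcup I^{\circ}$ together with only the exclusion edges $\{v,a_i^{\circ}\}$ for $u_i\notin\mathcal{L}(v)$. A vertex $v$ with $\mathcal{L}(v)=V(H)$ receives no such edge, so nothing forces $\phi'(v)$ to land in $V(H)$ rather than in $I$. Take $G=K_3$ with all lists equal to $V(H)$ and $H$ bipartite (so $t=2$): LIST-HOM$(G,H)$ is a no-instance, yet your $G'$ is the disjoint union of $K_3$ and $I^{\circ}$, and $K_3$ maps into the $(t+10)$-clique you place inside $I\subseteq H'$. Equisatisfiability fails. (Your ``main obstacle'' paragraph only worries about vertices \emph{incident} to some $a_i^{\circ}$ escaping into $I$; vertices incident to none are the real counterexample.) The paper prevents this by joining \emph{every} vertex of $G$ to all $h$ vertices $a_1,\dots,a_h$ of $A_h$ (and to $b_j$ only for excluded $j$), and then observing that every vertex of $H'$ outside $H$ has at most two neighbours in $A_h$, while a vertex of $G$ has at least $h\ge 3$ neighbours among the $a_j$'s whose images are forced to be distinct; hence its image must lie in $H$. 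You would need to add this universal attachment and the accompanying degree argument, on top of actually constructing and analysing the rigid gadget.
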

}}
\noindent\usebox{\boxlemmafive}
\begin{proof}
{\em Preparations.}
We start from a
simple $6$-vertex gadget $D$ consisting of a $5$-cycle together with an apex vertex adjacent 
to all the vertices of the cycle, see Fig.~\ref{fig:D}. 

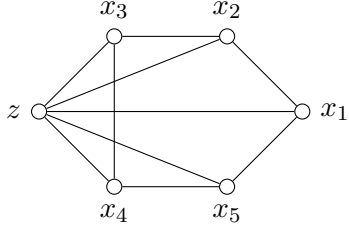
\begin{figure}[ht]
\begin{center}
\begin{tikzpicture}
  \foreach \x/\y/\n/\w in {2/0/1/0, 1/1/2/90, -0.5/1/3/90, -0.5/-1/4/-90, 1/-1/5/-90}
    \node[circle,draw,minimum size=2mm,inner sep=0mm,
    label=\w:{$x_{\n}$}] (b\n) at (\x,\y) {};
  \draw (b1) -- (b2) -- (b3) -- (b4) -- (b5) -- (b1);
  \node[circle,draw,minimum size=2mm,inner sep=0mm,label=180:$z$] (z) at (-1.5,0) {};
  \foreach \n in {1,2,...,5}
    \draw (b\n) -- (z);
\end{tikzpicture}
\end{center}
\caption{The graph~$D$.}\label{fig:D}
\end{figure}

An important property of $D$ is that for each homomorphism $\phi  \colon D \to D$ and $i \in [5]$,
\[ \phi(z) =z \text{ and }\phi(z) \neq \phi(x_i).\]
In words, $z$ is always mapped to $z$ and nothing else is mapped to~$z$. Indeed, because the vertex $z$ is adjacent to all the remaining vertices of~$D$, we have that $\phi(z) \neq \phi(x_i)$. By the same reason, we have that for every $i\in [5]$, $\phi(x_i)\in N_D(\phi(z))$.  But for every $x_i$ its open neighborhood $N_D(x_i)$ induces a bipartite graph. On the other hand, the chromatic number of the cycle $C=x_1x_2x_3x_4x_5$   is three, and thus it cannot be mapped by $\phi$ to 
$N_D(x_i)$ for any $i\in [5]$. Therefore, $\phi(z) =z$.

We join $k$ such $D$'s in a row to construct a larger gadget $T_k$ whose self-homomorphisms preserve the order on $z$'s, see Fig.~\ref{fig:T_k}.
An important property of $T_k$, which will be proven later,  is the following:
for each  $i \in [k]$ and homomorphism $\phi \colon T_k \to T_k$,
$\phi(z_i)=z_i$.
%

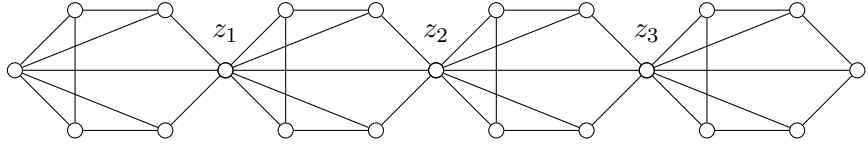
\begin{figure}[ht]
\begin{center}
\begin{tikzpicture}[scale=0.8]
  \foreach \xs in {0, 3.5, 7, 10.5} {

  \begin{scope}
  \foreach \x/\y/\n in {2/0/1, 1/1/2, -0.5/1/3,-0.5/-1/4,1/-1/5}
    \node[circle,draw,minimum size=2mm,inner sep=0mm] (b\n) at (\x+\xs,\y) {};
  \draw (b1) -- (b2) -- (b3) -- (b4) -- (b5) -- (b1);
  \node[circle,draw,minimum size=2mm,inner sep=0mm] (z) at (-1.5+\xs,0) {};
  \foreach \n in {1,2,...,5}
    \draw (b\n) -- (z);
  \end{scope}
  }
  
  \foreach \x/\n in {2/1, 5.5/2, 9/3}
    \node[anchor=south,circle,inner sep=2mm] at (\x,0) {$z_{\n}$};
  
\end{tikzpicture}
\end{center}
\caption{Gadget  $T_k$}\label{fig:T_k}
\end{figure}

We now further extend the graph $T_k$ to increase its chromatic number.
We do this by injecting cliques of size $K_{t+3}$.
This in particular guarantees that it cannot be mapped to a graph with chromatic number less than or equal to $\chi$.

We replace each $z$ in $T_k$ with $K_{t+3}$ and connect every vertex of $K_{t+3}$ to all neighbors of $z$ in the subsequent block. Denote the new graph by $T_{k,t+3}$. See Fig.~\ref{fig:T_kh3}.

\begin{figure}[ht]
\begin{center}
\begin{tikzpicture}[scale=0.7]
  \foreach \xs in {0, 4.1, 8.2, 12.3} {

  \begin{scope}
  \foreach \x/\y/\n in {2/0/1, 1/1/2, -0.5/1/3,-0.5/-1/4,1/-1/5}
    \node[circle,draw,minimum size=2mm,inner sep=0mm] (b\n) at (\x+\xs,\y) {};
  \draw (b1) -- (b2) -- (b3) -- (b4) -- (b5) -- (b1);
  \node[circle,draw,minimum size=12mm,inner sep=0mm] (z) at (-1.45+\xs,0) {$K_{t+3}$};
  \foreach \n in {1,2,...,5}
    \draw (b\n) -- (z);
  \end{scope}
  }
  
  \foreach \x/\n in {2/1, 6.1/2, 10.2/3}
    \node[anchor=south,circle,inner sep=2mm] at (\x-.1,.3) {$z_{\n}$};
  
\end{tikzpicture}
\end{center}
\caption{The gadget $T_{k,t+3}$. An edge from a clique
to a vertex of a cycle means that each vertex of the clique is joined to this vertex.}\label{fig:T_kh3}
\end{figure}
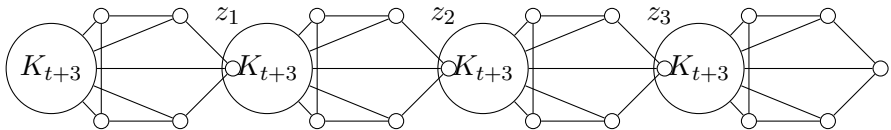

{\em Constructing $G'$.}
Let $A_h$ be a graph consisting of a matching with $h$ edges  $\{ \{a_1, b_1\},\ldots, \{a_h, b_h\}\}$.
Then the graph $G'$  consists of a copy of $G$, a copy of $T_{h,t+3}$, and a copy of $A_h$ with the following additional edges: the vertex $z_i$ from the $i$th block of $T_{h,t+3}$ is adjacent to the vertices $a_i$ and $b_i$. Also we add edges from $G$ to $A_h$: for a vertex $g_i\in G$ we add an edge $\{g_i, a_j\}$ for every $j$, and an edge $\{g_i, b_j\}$ if $j\not\in\mathcal{L}(i)$.  See Fig.~\ref{fig:gprime}. The number of vertices
in $G'$ is at most $n+2h+(h+1)(t+3+5) \le n+(h+1)(t+11)$.

\begin{figure}[ht]
\begin{center}
\begin{tikzpicture}[scale=0.7]
  \foreach \xs/\n in {0/1, 4.1/2, 8.2/3} {

  \begin{scope}
  \foreach \x/\y/\n in {2/0/1, 1/1/2, -0.5/1/3,-0.5/-1/4,1/-1/5}
    \node[circle,draw,minimum size=2mm,inner sep=0mm] (b\n) at (\x+\xs,\y) {};
  \draw (b1) -- (b2) -- (b3) -- (b4) -- (b5) -- (b1);
  \node[circle,draw,minimum size=12mm,inner sep=0mm] (z) at (-1.45+\xs,0) {$K_{t+3}$};
  \foreach \n in {1,2,...,5}
    \draw (b\n) -- (z);
    
  \node[circle,draw,minimum size=2mm,inner sep=0mm,label=left:$a_{\n}$] (p\n) at (1.7+\xs,-2) {};
  \node[circle,draw,minimum size=2mm,inner sep=0mm,label=right:$b_{\n}$] (q\n) at (2.3+\xs,-2) {};
  \draw (b1) -- (p\n) -- (q\n) -- (b1);
  \end{scope}
  }
  
  \foreach \xs in {12.3} {

  \begin{scope}
  \foreach \x/\y/\n in {2/0/1, 1/1/2, -0.5/1/3,-0.5/-1/4,1/-1/5}
    \node[circle,draw,minimum size=2mm,inner sep=0mm] (b\n) at (\x+\xs,\y) {};
  \draw (b1) -- (b2) -- (b3) -- (b4) -- (b5) -- (b1);
  \node[circle,draw,minimum size=12mm,inner sep=0mm] (z) at (-1.45+\xs,0) {$K_{t+3}$};
  \foreach \n in {1,2,...,5}
    \draw (b\n) -- (z);
  \end{scope}
  }
  
  \foreach \x/\n in {2/1, 6.1/2, 10.2/3}
    \node[anchor=south,circle,inner sep=2mm] at (\x,.3) {$z_{\n}$};
    
  \node[circle,draw,minimum size=20mm] (g) at (6.1,-5) {$G$};
  \node[circle,draw,minimum size=2mm,inner sep=0mm,label=left:$i$] (i) at (6.1,-4.4) {};
  
  \draw (i) -- (p1);
  \draw (i) -- (q1);
  \draw (i) -- (p2);
  \draw (i) -- (p3);
  
  
  
\end{tikzpicture}
\end{center}
\caption{The graph~$G'$. A vertex $i \in V(G)$ is connected to $b_j$ if and only if $j \not \in \cL(i)$, where $\cL(i)$ is the list associated with the vertex $i\in V(G).$}\label{fig:gprime}
\end{figure}
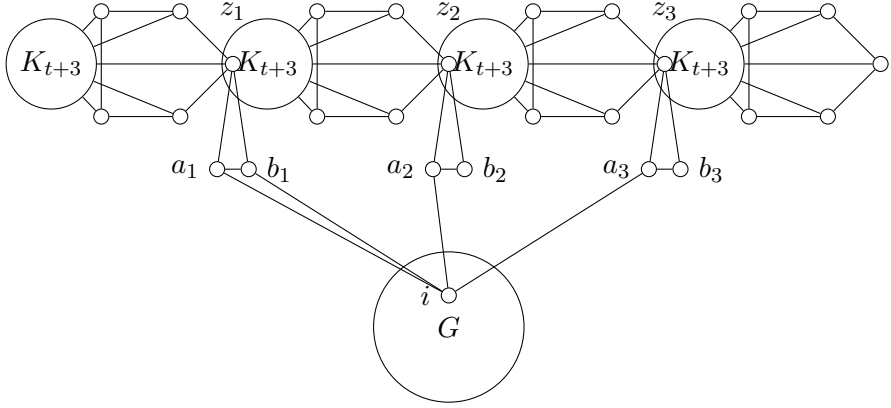

{\em Constructing $H'$.}
The graph $H'$ is constructed similarly. It consists of a copy of $H$, a copy of $T_{h,t+3}$, and a copy of $A_h$. For every $i$ we add edges $\{z_i,a_i\}$ and $\{z_i,b_i\}$ again. Also, each vertex $i$ of $H$ is adjacent to all the vertices from $A_h$ except for $b_i$. See Fig.~\ref{fig:hprime}.
The number of vertices
in $H'$ is at most $h+2h+(h+1)(t+3+5) \le (h+1)(t+11)$.
Now we bound the chromatic number of $H'$. It is easy to see that $t+8$ colors are enough to color $T_{h,t+3}$ (one can color all the cliques and $5$-cycles from left to right). Since $A_h$ is a separator in $H'$, $\chi(H')\le \chi(A_h) +\max(\chi(H), \chi(T_{h,t+3}))\le 2 + \max(t,t+8)=t+10$.

\begin{figure}[ht]
\begin{center}
\begin{tikzpicture}[scale=.7]
  \foreach \xs/\n in {0/1, 4.1/2, 8.2/3} {

  \begin{scope}
  \foreach \x/\y/\n in {2/0/1, 1/1/2, -0.5/1/3,-0.5/-1/4,1/-1/5}
    \node[circle,draw,minimum size=2mm,inner sep=0mm] (b\n) at (\x+\xs,\y) {};
  \draw (b1) -- (b2) -- (b3) -- (b4) -- (b5) -- (b1);
  \node[circle,draw,minimum size=12mm,inner sep=0mm] (z) at (-1.45+\xs,0) {$K_{t+3}$};
  \foreach \n in {1,2,...,5}
    \draw (b\n) -- (z);
    
  \node[circle,draw,minimum size=2mm,inner sep=0mm,label=left:$a_{\n}$] (p\n) at (1.7+\xs,-2) {};
  \node[circle,draw,minimum size=2mm,inner sep=0mm,label=right:$b_{\n}$] (q\n) at (2.3+\xs,-2) {};
  \draw (b1) -- (p\n) -- (q\n) -- (b1);
  \end{scope}
  }
  
  \foreach \xs in {12.3} {

  \begin{scope}
  \foreach \x/\y/\n in {2/0/1, 1/1/2, -0.5/1/3,-0.5/-1/4,1/-1/5}
    \node[circle,draw,minimum size=2mm,inner sep=0mm] (b\n) at (\x+\xs,\y) {};
  \draw (b1) -- (b2) -- (b3) -- (b4) -- (b5) -- (b1);
  \node[circle,draw,minimum size=12mm,inner sep=0mm] (z) at (-1.45+\xs,0) {$K_{t+3}$};
  \foreach \n in {1,2,...,5}
    \draw (b\n) -- (z);
  \end{scope}
  }
  
  \foreach \x/\n in {2/1, 6.1/2, 10.2/3}
    \node[anchor=south,circle,inner sep=2mm] at (\x,.3) {$z_{\n}$};
    
  \node[circle,draw,minimum size=20mm] (g) at (6.1,-5) {$H$};
  \node[circle,draw,minimum size=2mm,inner sep=0mm,label=left:$i$] (i) at (6.1,-4.4) {};
  
  \draw (i) -- (p1);
  \draw (i) -- (q2);
  \draw (i) -- (p2);
  \draw (i) -- (p3);
  \draw (i) -- (q3);
  
  
  
\end{tikzpicture}
\end{center}
\caption{The graph~$H'$. A vertex $i \in V(H)$ is connected to all $a_j$'s and all $b_j$'s except for $b_i$.}
\label{fig:hprime}
\end{figure}
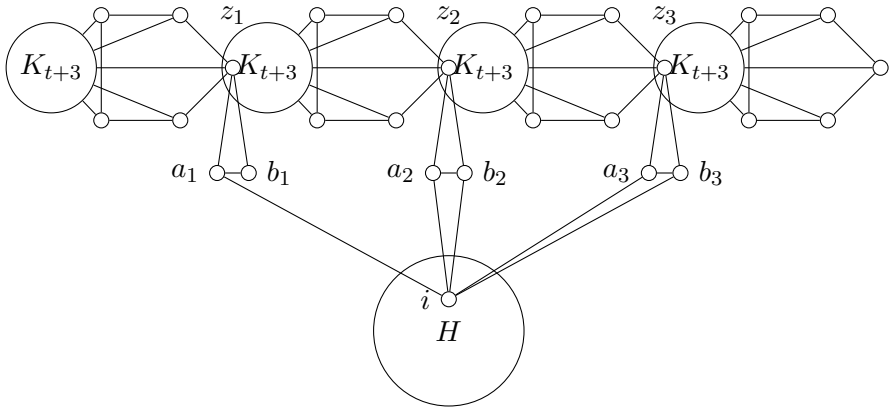

{\em Correctness.} We now turn to prove that the instance $(G,H)$ of LIST-HOM is equisatisfiable to an instance $(G',H')$
of HOM.

\begin{claim}
Any homomorphism $\phi$ from $G'$ to $H'$ maps $T_{h,t+3}$ into $T_{h,t+3}$.
\end{claim}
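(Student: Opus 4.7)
The plan is to show that the image of every vertex of $T_{h,t+3}\subseteq G'$ under $\phi$ lies inside $T_{h,t+3}\subseteq H'$. The argument proceeds in two stages: first I pin down the image of each copy of the clique $K_{t+3}$ sitting in $T_{h,t+3}$, and then I deduce the images of the attached $5$-cycle vertices.

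For the first stage I classify all cliques of size $t+3$ in $H'$. Since $\chi(H)\le t$, the graph $H$ contains no $K_{t+1}$; since $A_h$ is a matching, it contains no triangle; and there are no edges between $H$ and $T_{h,t+3}$, so no clique uses vertices from both. A mixed clique drawn from $H$ and $A_h$ can contain at most one matching pair $\{a_j,b_j\}$, and since $b_j$ is not adjacent to the vertex $j\in V(H)$, such a clique has size at most $t+2$. Consequently, every $K_{t+3}$ in $H'$ is contained in $K_{t+3}^{(j)}\cup \{a_j,b_j\}\cup N_{T_{h,t+3}}(K_{t+3}^{(j)})$ for some $j\in [h]$, where $K_{t+3}^{(j)}$ denotes the clique replacing $z_j$ in~$T_{h,t+3}$.

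Now fix $i\in [h]$ and consider $\phi(K_{t+3}^{(i)})$. Since a clique must map to a clique, the classification above places this image inside such a neighborhood for some index $j$. I claim that in fact $\phi(K_{t+3}^{(i)})=K_{t+3}^{(j)}$. Otherwise the image uses at least one ``outside'' vertex, namely one of $a_j,b_j$ or a $5$-cycle vertex of $T_{h,t+3}$. In each of the finitely many possibilities I compute the common neighborhood in $H'$ of the $t+3$ chosen image vertices and show that it induces a subgraph of chromatic number at most $2$ (a single edge, a disjoint union of edges, or a path of length two). But every vertex of the attached $C_5^{(i)}$ in $G'$ is adjacent to all of $K_{t+3}^{(i)}$, so its image must lie in this common neighborhood; since $\chi(C_5)=3$, this yields a contradiction. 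Hence $\phi(K_{t+3}^{(i)})=K_{t+3}^{(j)}$ entirely inside $T_{h,t+3}\subseteq H'$.

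Once this is established, the remaining vertices of $T_{h,t+3}\subseteq G'$ follow almost for free. The images of the $5$-cycle vertices of block $i$ of $G'$ must lie in the common neighborhood of $K_{t+3}^{(j)}$ in $H'$, which consists of the $5$-cycle vertices in the blocks of $T_{h,t+3}$ adjacent to $K_{t+3}^{(j)}$ together with the isolated edge $\{a_j,b_j\}$. Since $C_5^{(i)}$ is connected of chromatic number $3$, it cannot map into the $\{a_j,b_j\}$-component and must land in the $5$-cycle part, which sits entirely inside $T_{h,t+3}\subseteq H'$. Combining the two stages yields the claim. The principal obstacle is the case analysis in the second stage: one must carefully track, for each possible ``mixed'' choice of $\phi(K_{t+3}^{(i)})$, the vertices of $H'$ simultaneously adjacent to all $t+3$ images, and verify that the resulting induced subgraph is always $2$-chromatic.
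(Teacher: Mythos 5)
Your proof is correct and follows essentially the same route as the paper's: since $\chi(H)\le t$ forces every $(t+3)$-clique of $H'$ to lie inside $T_{h,t+3}$, the cliques of $T_{h,t+3}\subseteq G'$ must land there, and the $5$-cycle vertices follow because each is adjacent to all $t+3$ vertices of one such clique, so its image is a common neighbour of a $(t+3)$-clique of $H'$ and hence again inside $T_{h,t+3}$. Your second stage, pinning $\phi(K_{t+3}^{(i)})$ down to be exactly some $K_{t+3}^{(j)}$ via the $\chi(C_5)=3$ argument, proves more than this claim needs --- the paper defers that sharper statement (and precisely that argument) to the next claim about bijectivity.
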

\begin{proof}[Proof of the claim]
No pair of vertices of the same clique of $T_{h,t+3}$ is mapped to the same vertex in $H'$, because $H'$ has no self-loops. Therefore, cliques from $T_{h,t+3}$ are mapped to cliques from $T_{h,t+3}$ ($H'$ has no more cliques of size $t+3$ since $\chi(H) \le t$). The remaining vertices of $T_{h,t+3}$ have at least $t+3$ neighbors from one clique, therefore they must be mapped to vertices from $T_{h,t+3}$.
\end{proof}

\begin{claim}
Any homomorphism $\phi$ from $G'$ to $H'$ bijectively maps $T_{h,t+3}$ to $T_{h,t+3}$ so that the order of $z$'s is preserved.
\end{claim}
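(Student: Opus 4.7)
The plan is to use the previous claim and then exploit the combinatorial rigidity of the gadget $T_{h,t+3}$.  First I would analyse how $\phi$ acts on the $h$ distinguished cliques $K_{t+3}$ of the source copy of $T_{h,t+3}$.  Because $H'$ has no self-loops, the image of each source $K_{t+3}$ under $\phi$ is a clique on $t+3$ distinct vertices, so it must be a $K_{t+3}$-subgraph of $H'$.  The only such subgraphs are the $h$ designated cliques of the target copy of $T_{h,t+3}$: the copy of $H$ inside $H'$ has $\chi(H)\le t$ and therefore contains no $K_{t+1}$, the matching $A_h$ contains no triangle, and the $5$-cycles of $T_{h,t+3}$ together with their attachments to single cliques do not create any extra $K_{t+3}$.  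This yields a function $\pi\colon [h]\to [h]$ such that $\phi(z_i)=z_{\pi(i)}$ as a set.

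The next step is to show that $|\pi(i)-\pi(i+1)|=1$ for every $i$.  Let $C^{(i)}$ be the $5$-cycle of $T_{h,t+3}$ lying between $z_i$ and $z_{i+1}$.  Each vertex of $C^{(i)}$ is adjacent to every vertex of both $z_i$ and $z_{i+1}$ (this is the apex-and-cycle structure inherited from the basic gadget $D$ after replacing the apex by a clique), so the image of any such vertex under $\phi$ must be a common neighbor of $z_{\pi(i)}$ and $z_{\pi(i+1)}$ inside $T_{h,t+3}$.  The common neighbors of two distinct designated cliques in $T_{h,t+3}$ are precisely the five vertices of the $5$-cycle between them, and this set is non-empty only when the two cliques are adjacent in the linear arrangement.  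Hence $\phi(C^{(i)})$ lands inside the $5$-cycle between $z_{\pi(i)}$ and $z_{\pi(i+1)}$, forcing $|\pi(i)-\pi(i+1)|=1$.  Folding $C^{(i)}$ into a clique is ruled out because clique vertices are not common neighbors of two distinct cliques, and collapsing $C^{(i)}$ to fewer than $3$ vertices is ruled out by $\chi(C^{(i)})=3$.

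The final step is to promote the ``unit step'' condition to a monotone bijection.  A function $\pi\colon [h]\to [h]$ with $|\pi(i)-\pi(i+1)|=1$ for every $i$ is a walk of length $h-1$ on the path graph $1\text{-}2\text{-}\cdots\text{-}h$, and such a walk is a bijection onto $[h]$ exactly when it never backtracks, in which case it is either the identity or the full reversal of the linear order.  I would rule out backtracking by a local argument around an interior clique $z_i$: if $\pi(i-1)=\pi(i+1)$, then the two source cycles $C^{(i-1)}$ and $C^{(i)}$, which are attached to $z_i$ on opposite sides and are vertex-disjoint outside $z_i$, would both have to map surjectively onto the single target $5$-cycle between $z_{\pi(i)}$ and $z_{\pi(i\pm 1)}$; the induced collisions then contradict either an internal edge of one of the source cycles or an adjacency between $C^{(i-1)}\cup C^{(i)}$ and the cliques $z_{i-2}, z_{i+2}$, whose images are already pinned down by the previous step applied at indices $i-1$ and $i+1$.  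This no-backtracking step is the main obstacle; once settled, $\pi$ is a monotone bijection on $[h]$, which is the assertion of the claim.
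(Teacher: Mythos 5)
Your proof rests on a misreading of the gadget $T_{h,t+3}$, and this undermines each of your three steps. In the actual construction a block $D_i$ consists of a clique $K_{t+3}$ completely joined to a $5$-cycle, and consecutive blocks share only the single vertex $z_{i+1}$, which lies on the $5$-cycle of $D_i$ and inside the clique of $D_{i+1}$; a $5$-cycle is \emph{not} completely joined to two cliques on either side of it. Two immediate consequences. First, your inventory of $(t+3)$-cliques in $H'$ is wrong: since each designated clique is completely joined to its own $5$-cycle, that clique together with an edge of the cycle forms a $K_{t+5}$, so $H'$ contains many $(t+3)$-cliques besides the $h$ designated ones, and your step~1 does not yield that source cliques map onto designated cliques. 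The paper closes exactly this hole by using the attached $5$-cycle: if the image of a source clique contained one or two cycle vertices of a target block, the image of the source $5$-cycle would be confined to at most three vertices containing no triangle, which is impossible for an odd cycle. Second, your step~2 premise that every vertex of $C^{(i)}$ is a common neighbor of both adjacent cliques is false; in fact the common neighborhood of all vertices of the $i$-th designated clique is precisely its own $5$-cycle, distinct $5$-cycles are disjoint, and so two distinct designated cliques have \emph{no} common neighbor. The ``unit step'' conclusion therefore does not follow from your argument.

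Even granting a corrected step~1, your endgame is both incomplete and insufficient. You leave the no-backtracking argument as ``the main obstacle,'' and even a successful version of it would only show that $\pi$ is the identity or the full reversal, whereas the claim (and its later use, that $\{a_i,b_i\}$ is mapped to $\{a_i,b_i\}$) requires $\phi(z_i)=z_i$ exactly; you never exclude the reversal. The paper avoids the walk-on-a-path framing entirely by exploiting the asymmetry of the gluing: once each block is known to map onto a block (clique onto clique, $5$-cycle onto the corresponding $5$-cycle), the shared vertex $z_{i+1}$ lies on the $5$-cycle of $D_i$, so its image lies on the $5$-cycle of $\phi(D_i)=D_j$, and the only designated clique meeting that cycle is the one of $D_{j+1}$; hence the map on blocks is forced strictly forward, and since the source and target chains have the same length it must be the identity. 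You would need to rebuild steps~1--3 around this asymmetric structure for the argument to go through.
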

\begin{proof}[Proof of the claim]
By $D_i$ we denote the $i$th block of $T_{h,t+3}$, $D_i$ consists of a clique and a $5$-cycle. Note that two consecutive blocks $D_{i-1}$ and $D_i$ intersect on $z_i$.
\begin{enumerate}
\item {\em Every clique is mapped to a clique.}
First note that a clique is mapped into one block. Indeed, there are no vertices outside of a block that are connected to more than one vertex of the block. Assume, to the contrary, that $K_{t+3}$ is mapped to one block but not to $K_{t+3}$. Then its image has to contain one or two vertices of the $5$-cycle from that block. If the image contains only one vertex of the $5$-cycle, then the image of the $5$-cycle has at most $3$ vertices: one vertex from $K_{t+3}$, two neigbors of the vertex from the $5$-cycle (because all the vertices of the image of the $5$-cycle must be connected to all the vertices of the image of the clique). Note that these three vertices do not form a triangle, therefore the $5$-cycle cannot be mapped to them. If the image of the clique contains two vertices outside of $K_{t+3}$, then for the same reason the image of the $5$-cycle must contain at most $3$ vertices which do not form a triangle. This analysis shows that every $K_{t+3}$ must be mapped to $K_{t+3}$.
\item {\em Every block is mapped to a block.}  We already know that every clique is mapped to a clique. The $5$-cycle from the same block must be mapped to the corresponding $5$-cycle, because it is the only image that contains a cycle of odd length and every vertex of which is connected to the clique (recall that the images of the clique and the cycle do not intersect, since their preimages are joined by edges). Note that since the clique and the cycle are mapped to themselves, $z_i$ has to be mapped to some $z_j$.
\item {\em If $D_i$ is mapped to $D_j$, then $D_{i+1}$ is mapped to $D_{j+1}$.}
The cycle from $D_i$ shares a vertex with the clique from $D_{i+1}$. It is clear that $D_{i+1}$ cannot be mapped into the same block as $D_i$. Indeed, in this case the clique of $D_{i+1}$ would be mapped to a clique in $D_i$ containing $z_i$, but there are no such cliques in $D_i$. Therefore, $D_i$ and $D_{i+1}$ must be mapped in consecutive blocks.
\end{enumerate}
The above proves that for every $i$, $D_i$ is mapped to $D_i$, which implies that any homomorphism preserves the order of $z$'s.
\end{proof}

\begin{claim}
Any homomorphism $\phi$ from $G'$ to $H'$ maps $A_h$ to $A_h$ so that $\{a_i, b_i\}$ is mapped to $\{a_i, b_i\}$.
\end{claim}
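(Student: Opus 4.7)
My plan is to leverage the previous claim, which tells us that $\phi$ maps the clique $z_i$ in $G'$ bijectively onto the clique $z_i$ in $H'$ for each $i$. As a first step, I will observe that both $a_i$ and $b_i$ are adjacent in $G'$ to every vertex of the clique $z_i$, so their images $\phi(a_i)$ and $\phi(b_i)$ must be common neighbors of the clique $z_i$ in $H'$. From the construction of $H'$, these common neighbors consist of (i) the vertices $a_i$ and $b_i$ themselves, and (ii) a localized set of vertices inside $T_{h,t+3}$ (the cycle vertices of the blocks incident to $z_i$, and, depending on how the ``connect every vertex of $K_{t+3}$ to all neighbors of $z$'' construction is unrolled, vertices of the adjacent cliques $z_{i\pm 1}$).

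The main step of the proof is to rule out the case that $\phi(a_i)$ lies inside $V(T_{h,t+3})$, which I will do by exploiting that $a_i$ is adjacent in $G'$ to every vertex of $V(G)$. The key structural fact is that no vertex of $T_{h,t+3}$ has a neighbor in $V(H)$ in the graph $H'$: its neighbors lie only in $T_{h,t+3}$ itself and in the bounded set $\{a_j, b_j\}$ for indices $j$ in a small window around $i$. Hence if $\phi(a_i) \in V(T_{h,t+3})$, then $\phi(V(G))$ would be confined to a small, spatially localized region of $H'$ around $z_i$. I then derive a contradiction by applying the identical analysis to $a_j$ for some index $j$ with $|i-j|$ large enough that the localized regions around $z_i$ and $z_j$ are disjoint (such a $j$ exists provided $h$ is larger than a small absolute constant, a mild assumption that is harmless in the parameter regime of our reduction). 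Since $\phi(V(G))$ would have to lie in the intersection of both regions, and that intersection is empty, we obtain $V(G)=\emptyset$, contradicting the non-triviality of the instance.

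This forces $\phi(a_i)\in\{a_i,b_i\}$, and the same argument applied to $b_i$ gives $\phi(b_i)\in\{a_i,b_i\}$. Finally, since $\{a_i,b_i\}\in E(G')$ and $H'$ is simple, the images cannot coincide, so $\{\phi(a_i),\phi(b_i)\}=\{a_i,b_i\}$, completing the proof. The main technical obstacle I foresee is the careful enumeration of the common neighbors of the clique $z_i$ in $H'$ and, for each candidate $T_{h,t+3}$-vertex, verifying that its $H'$-neighborhood really is confined to a small region near $z_i$ so that the disjointness argument with the analogous region near $z_j$ goes through.
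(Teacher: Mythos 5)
Your proposal is correct in its main thrust and follows essentially the same route as the paper: use the preceding claim to pin $\phi(z_i)$ to $z_i$, deduce that $\phi(a_i)$ and $\phi(b_i)$ must lie in $N_{H'}(z_i)$, i.e.\ in $\{a_i,b_i\}$ or inside $T_{h,t+3}$, and kill the latter case by observing that $a_i$'s adjacency to all of $V(G)$ forces $N_{H'}(\phi(a_i))\cap N_{H'}(\phi(a_j))\neq\emptyset$ for every $j$, which fails for distant $j$ when $\phi(a_i)$ sits in the gadget. This is exactly the paper's ``no paths of length $2$ to all other pairs $\{a_j,b_j\}$'' argument, just unpacked; your version also makes explicit the mild requirement that $h$ exceed a small constant, which the paper leaves implicit but which is harmless in the regime where the lemma is applied.

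The one genuine slip is the sentence ``the same argument applied to $b_i$ gives $\phi(b_i)\in\{a_i,b_i\}$.'' That argument does not transfer to $b_i$: in $G'$ the vertex $b_j$ is joined to $g_i$ only when $j\notin\cL(i)$, so $b_i$ may have \emph{no} neighbors in $V(G)$ at all, and the ``$\phi(V(G))$ is confined to $N_{H'}(\phi(b_i))$'' step collapses. The conclusion is still easily recovered without it: once $\phi(a_i)\in\{a_i,b_i\}$ is known, the edges $\{a_i,b_i\}$ and $\{z_i,b_i\}$ of $G'$ force $\phi(b_i)\in N_{H'}(\phi(a_i))\cap N_{H'}(z_i)$, and by the construction of $H'$ this intersection is exactly the other element of $\{a_i,b_i\}$ (vertices of $V(H)$ are not adjacent to $z_i$, and the $T_{h,t+3}$-neighbors of $z_i$ are not adjacent to $a_i$ or $b_i$). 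You should replace the appeal to ``the same argument'' by this one-line neighborhood computation.
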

\begin{proof}[Proof of the claim]
Every pair $\{a_i, b_i\}$ is connected to $z_i\in T_{h,t+3}$, so it can be mapped either to $\{a_i, b_i\}$ or to some vertices of $T_{h,t+3}$. But in the latter case it would not have paths of length $2$ to all other pairs $\{a_j,b_j\}$.
\end{proof}

\begin{claim}\label{claim4}
Any homomorphism $\phi$ from $G'$ to $H'$ maps $G$ to $H$.
\end{claim}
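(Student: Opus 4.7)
The plan is to leverage the three previous claims (which pin down $\phi$ on $T_{h,t+3}$ and $A_h$) and then use the very specific edge pattern between $V(G)$, $A_h$ and $V(H)$ to force $\phi(V(G))\subseteq V(H)$ together with the list constraints. The key fact I will use throughout is: by the preceding claim, for every $j$ we have $\{\phi(a_j),\phi(b_j)\}=\{a_j,b_j\}$, so in particular $\phi$ maps each pair of $A_h$ to the corresponding pair in $A_h$.

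First I would show $\phi(g_i)\in V(H)$ for every $g_i\in V(G)$. Since $g_i$ is adjacent in $G'$ to every $a_j$, the image $\phi(g_i)$ must be adjacent in $H'$ to $\phi(a_j)\in\{a_j,b_j\}$ for every $j$. I will rule out each non-$V(H)$ location by inspecting neighborhoods in $H'$: a vertex in the $5$-cycle of a block of $T_{h,t+3}$ has no neighbor in $A_h$; a vertex in a clique $K_{t+3}$ (which includes $z_i$) is adjacent in $A_h$ only to the single pair $\{a_i,b_i\}$, so it cannot be adjacent to a representative of each pair when $h\ge 2$; and a vertex $a_k$ or $b_k$ of $A_h$ itself is, inside $A_h\cup T_{h,t+3}\cup V(H)$, adjacent only to its partner, to the clique at $z_k$, and to $V(H)$, so again it misses $\{a_j,b_j\}$ for $j\neq k$. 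Hence $\phi(g_i)\in V(H)$.

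Next I would enforce the list constraint, i.e.\ $\phi(g_i)\in\mathcal{L}(i)$. Suppose towards contradiction that $\phi(g_i)=v\in V(H)$ with $v\notin\mathcal{L}(i)$. Then by construction of $G'$ the vertex $g_i$ is adjacent to \emph{both} $a_v$ and $b_v$, so $v$ must be adjacent in $H'$ to both elements of $\{\phi(a_v),\phi(b_v)\}=\{a_v,b_v\}$. However in $H'$ the vertex $v\in V(H)$ is adjacent to $a_v$ but, by construction, \emph{not} to $b_v$. This is the contradiction, so $v\in\mathcal{L}(i)$.

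Finally, for any edge $\{g_i,g_j\}\in E(G)\subseteq E(G')$ we have $\{\phi(g_i),\phi(g_j)\}\in E(H')$, and since both endpoints lie in $V(H)$ and $H$ is an induced subgraph of $H'$ (no new edges were added inside $V(H)$ during the construction of $H'$), this edge already lies in $E(H)$. Combined with the previous two steps, $\phi|_{V(G)}$ is a list homomorphism from $(G,\mathcal{L})$ to $H$. The only mild subtlety — and the main place to be careful — is the case analysis in the first step, where I must check that no vertex outside $V(H)$ has enough neighbors in $A_h$ to accommodate the high-degree attachment of $g_i$ to every $a_j$; the asymmetric $b_v$-edge between $V(G)$ and $A_h$ versus $V(H)$ and $A_h$ is exactly what makes the list-forcing argument in the second step succeed.
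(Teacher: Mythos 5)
Your proof of the claim itself is correct and uses essentially the same argument as the paper: every $g_i\in V(G)$ is adjacent in $G'$ to a representative of each of the $h$ pairs of $A_h$, while (using the previous claim that each pair $\{a_j,b_j\}$ is mapped to itself) no vertex of $T_{h,t+3}$ or $A_h$ in $H'$ has neighbors in more than one such pair, forcing $\phi(g_i)\in V(H)$. Your additional steps verifying the list constraints and edge preservation go beyond the stated claim but correctly anticipate the equisatisfiability argument the paper gives immediately afterwards.
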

\begin{proof}[Proof of the claim]
Assume, to the contrary, that a vertex $g\in V(G)$ is mapped to a vertex $v\in V(T_{h,t+3})$ or a vertex $a\in V(A_h)$.
Vertex $g$ is adjacent to at least $h$ vertices from $A_h$, but $v$ and $a$ are adjacent to at most $2$ vertices from $A_h$  (recall that by the previous claim every $\{a_i, b_i\}$ is mapped to $\{a_i, b_i\}$).
\end{proof}

Now we show that the two instances are equisatisfiable. Let $\phi$ be a 
list homomorphism from $G$ to $H$. We show that its natural extension $\phi'$
mapping $T_{h,t+3}$ to $T_{h,t+3}$ and $A_h$ to $A_h$ is a correct homomorphism from $G'$ to $H'$.
This is non-trivial only for edges of $G'$ from $G$ to $A_h$. Consider an edge from
a vertex $i$ of $G$ to the vertex $b_j$. The presence of this edge means that $i$
is not mapped to $j$ by $\phi$. Recall that the $b_j$ is mapped by $\phi$ to $b_j$. This means that the considered edge in $G'$ is mapped to an edge in $H'$ by $\phi'$.

For the reverse direction, let $\phi'$ be a homomorphism from $G'$ to $H'$.
We show that its natural projection is a list homomorphism from $G$ to $H$.
Since $\phi'$ maps $G$ to $H$ it is enough to check that all list constrains are satisfied.
For this, consider a vertex $i$ from $G$ and assume that $j \not \in \mathcal{L}(i)$. Then $\phi'$ does not map $i$ to $j$ as otherwise there would be no image for the edge $\{g_i, b_j\}$, where $g_i$ is the $i$th vertex of $G$.

{\em Running time of the reduction.} The reduction clearly takes time polynomial in the input length.
\end{proof}

\section{Lower bounds for the graph homomorphism problem}\label{sec:lowerbounds}
\subsection{{Parameterization by the number of vertices}}
It is easy to see that the brute-force algorithm solves LIST-HOM$(G,H)$ in time \[ \cOs(h^n)=\cOs(2^{n\log{h}}) \, .\] In this subsection we show a $2^{\Omega\left(\frac{n\log{h}}{\log\log{h}}\right)}$ lower bound under the ETH assumption. 

\noindent\usebox{\boxmainboundone}
\begin{proof}
Let $\gamma>4$ be a large enough constant such that $\frac{\log x}{100 \log \log x}\ge2$ for $x\ge\gamma$. If $h(n)<\gamma$ for all values of $n$, then an algorithm with running time \eqref{eq:vert} would solve \textsc{3-Coloring} in time $\cOs\left(2^{\frac{cn\log{h(n)}}{\log\log{h(n)}}} \right)=\cOs\left(2^{{cn\log{\gamma}}}\right)$ (recall that $h(n) \ge 3$). Therefore, by choosing a small enough constant $c$ such that $c\log{\gamma} < \beta$, we arrive to a contradiction with Lemma~\ref{lemma:3col}.

From now on we assume that $h(n)\ge\gamma$ for large enough values of $n$.
Let $G$ be an $n$-vertex graph of maximum degree $4$ that needs to be $3$-colored. 
We first use Lemma~\ref{lemma:3coltolisthom}, for a parameter $2\le r \le n$ to be defined later, to get an equisatisfiable instance $(G',H')$ of LIST-HOM with $|V(G')|=n/r$ and $|V(H')| \le r^{50 r}$. 
Note that $\chi(H') \le |V(H')|\le r^{50 r}$.
Hence Lemma~\ref{lemma:lhomtohom} provides us with an equisatisfiable instance $(G'',H'')$ of HOM with $|V(G'')| \le n/r+(r^{50r}+1)(r^{50r}+11) \le n/r+r^{102r}$ and $|V(H'')| \le (r^{50r}+1)(r^{50r}+11) \le r^{102 r}$.
Let
\[ r'=\frac{\log n}{204 \log \log n}\,, \quad
r=\min\left(r',\frac{\log h(\frac{2n}{r'})}{102 \log \log h(\frac{2n}{r'})}\right) .\]
Note that $r\le r'<n$. Also,  $h(n)\ge\gamma$ implies that $r\ge2$ for sufficiently large values of $n$. 
Let us show that 
\begin{equation}\label{eq:rr}
r\ge\frac{\log h(\frac{2n}{r'})}{d\cdot204 \log \log h(\frac{2n}{r'})}.
\end{equation}
This clearly holds if $r<r'$, so consider the case $r = r'$. The function $\log x/\log \log x$ increases for $x>4$. Recall that $h(n) \ge \gamma$ for large enough values of~$n$, hence $h(2n/r') \ge \gamma > 4$ for large
enough values of~$n$. Hence
\[ \frac{
     \log \left( \frac{2n}{r'} \right)
   }{
     \log \log \left( \frac{2n}{r'} \right)
   } 
   \le
   \frac{d \log n}{\log \log n + \log d} \le \frac{d\log n}{\log \log n} = 204dr'=204dr
\]
which implies~\eqref{eq:rr}. 
Then
\[ |V(G'')| \le \frac nr + r^{102 r} \le \frac nr + (\log n) ^{\frac{\log n}{2\log \log n}} = \frac nr + \sqrt{n} \le \frac{2n}{r} \, ,
\]
 \[ |V(H'')| \le r^{102 r}  \le \left(\log h\left(\frac{2n}{r'}\right)\right)^{\frac{\log h(\frac{2n}{r'})}{\log \log h(\frac{2n}{r'})}} = h\left(\frac{2n}{r'}\right) \le  h\left(\frac{2n}{r}\right) 
 \le h(|V(G'')|).
\]
Hence one can add isolated vertices to both $G''$ and $H''$ (clearly this does not change the problem) such that $|V(G'')|=2n/r$ and $|V(H'')|=h(2n/r)$ and run an algorithm from the theorem statement on the instance $(G'',H'')$.
  
Note that the running time of the reduction is \[ 
\poly(|G|, |G'|, |G''|, |H|, |H'|, |H''|)=\poly(n,h(2n/r))=\cOs(1) \, . \]
Thus, an algorithm with running time \eqref{eq:vert} for HOM implies an algorithm for \textsc{3-Coloring} with running  time
\[\cOs\left( 2^{c\cdot \frac{2n}{r} \cdot \frac{\log h(\frac{2n}{r'})}{\log \log h(\frac{2n}{r'})}}  \right) = \cOs \left( 2^{408cdn}\right) \]
(recall the inequality \eqref{eq:rr}).
Therefore, by choosing a small enough constant $c>0$ such that $408cd < \beta$, we arrive to a contradiction with Lemma~\ref{lemma:3col}.
\end{proof}


\subsection{Parameterization by the chromatic number of $H$}
\noindent\usebox{\boxchibound}
\begin{proof}
Let $G$ be a graph on $n$ vertices of degree at most four that needs to be $3$-colored. 
Let $2 \le r \le n$ be a constant to be chosen later. 
We first greedily find a $5$-coloring of $G$. Then add at most $5r$ isolated vertices to the graph
and assign each of them one of five colors such that in the resulting $5$-colored graph the number of vertices of each color is a multiple of~$r$. Now, partition the vertices into groups of size $r$ such that each group consists of vertices of the same color. Then construct an equisatisfiable instance $(G_1,H_1)$ of LIST-HOM according to this partition using Lemma~\ref{lemma:3coltolisthom}. Since $G_1$ is an edge preserving  $r$-grouping of $G$ and each group contains only vertices of the same color, we conclude that $\chi(G_1) \le 5$
(each bucket can be assigned the color of its $r$ vertices). We have that $|V(G_1)| \le \lceil n/r \rceil +5$ and $|V(H_1)| \le r^{50r}$.

We now apply Lemma~\ref{lemma:chromatic} to construct a graph $H_2$ such that $\chi(H_2) \le 5$, $|V(H_2)| \le 5r^{50r}$, and $(G_1,H_1)$ is equisatisfiable to $(G_1,H_2)$.

Finally, we use Lemma~\ref{lemma:lhomtohom} to construct an instance $(G_3,H_3)$ of HOM that is equisatisfiable to $(G_1,H_2)$. Then 
\[ |V(G_3)| \le \lceil n/r \rceil +5+(5r^{50r}+1)(5+1) \le 2n/r\]
for large enough $n$ since $r=O(1)$. Also, $\chi(H_3) \le 15$. 

%
%

Thus, if HOM$(G,H)$ could be solved in time 
$\cOs\left( f(\chi(H))^{|V(G)|} \right)$ then such an algorithm
could be used to solve \textsc{3-Coloring} for $G$ in time
 \[ \cOs\left( f(\chi(H_3))^{|V(G_3)|} \right) =\cOs\left( f(15)^{2n/r}\right) \, .\]
 Thus, for a large enough constant $r$ such that $f(15)^{2/r} < 2^\beta$ we get a contradiction with Lemma~\ref{lemma:3col}.
%
\end{proof}
%
%

\subsection{{Parameterization by the vertex cover of~$G$}}
\label{sec:vc_lower_bounds}
The following lemma follows from known results about graph homomorphism on graphs of bounded treewidth (the minimum vertex cover of a graph is always at least its treewidth), see e.g.~\cite{DiazST02}. For vertex cover parameterization such an upper bound becomes very simple and we add the proof for completeness. Note that the minimum vertex cover of $G$ can be found in time 
$1.28^{\vc(G)} \cdot n^{\cO(1)}$~\cite{ChenKX10}.

\begin{lemma}\label{lm:vcup}
There exists an algorithm solving LIST-HOM$(G,H)$
for an $n$-vertex graph $G$, $h$-vertex graph $H$, and a vertex cover $C \subseteq V(G)$ of $G$ in time $\cOs\left( h^{|C|}\right)$.
\end{lemma}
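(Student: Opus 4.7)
The plan is to exploit the defining property of a vertex cover: every edge of $G$ has at least one endpoint in $C$, so $I := V(G) \setminus C$ is an independent set. First I would enumerate, by brute force, all $h^{|C|}$ possible mappings $\phi_C \colon C \to V(H)$, discarding any $\phi_C$ that violates the list constraint $\phi_C(v) \in \cL(v)$ for some $v \in C$ or that maps some edge $\{u,v\} \in E(G[C])$ to a non-edge of $H$. This costs $\Oh^*(h^{|C|})$ time per candidate to verify.

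Next, given a surviving $\phi_C$, I would show that it extends to a list homomorphism $\phi \colon V(G) \to V(H)$ if and only if each $v \in I$ can be assigned independently. Concretely, for each $v \in I$ define
\[
S(v) \;=\; \Bigl\{\, x \in \cL(v) \;:\; \{x, \phi_C(w)\} \in E(H) \text{ for every } w \in N_G(v) \,\Bigr\}.
\]
Since $C$ is a vertex cover, $N_G(v) \subseteq C$, so $S(v)$ depends only on $\phi_C$. The extension exists iff $S(v) \neq \emptyset$ for every $v \in I$, in which case we pick any element of $S(v)$ as $\phi(v)$. Computing $S(v)$ takes polynomial time, and we do this for each of the $|I| \le n$ independent vertices.

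The total running time is therefore $\Oh^*(h^{|C|})$: the outer loop contributes $h^{|C|}$ branches, and each branch is processed in time polynomial in $n + h$. There is no serious obstacle here; the only thing to be careful about is to fold the list constraints into both the enumeration over $C$ and the definition of $S(v)$, so that the final $\phi$ is a valid list homomorphism and not merely a homomorphism.
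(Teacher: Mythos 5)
Your proof is correct and follows essentially the same approach as the paper's: enumerate all $h^{|C|}$ mappings of the vertex cover $C$ into $V(H)$, and for each one check in polynomial time that it respects lists and edges inside $C$ and that every vertex of the independent set $V(G)\setminus C$ has at least one admissible image compatible with the images of its neighbours. The only nit is the phrase ``costs $\Oh^*(h^{|C|})$ time per candidate to verify,'' which should read ``polynomial time per candidate''; the overall bound you state is the intended one.
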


\begin{proof}
The algorithm just goes through all possible $h^{|C|}$ mappings
of the vertices from $C$ to the vertices of $H$. For each such  mapping $\phi$
 it is easy to check whether  $\phi$ can be extended to a 
homomorphism from $G$ to~$H$. 
Indeed,  because $V(G) \setminus C$ is an independent set, a mapping $\phi$
can be extended to a homomorphism if and only if 
(a) $\phi$ is a homomorphism from $G[C]$ to $H$, and (b)  
 for every $v\in V(G) \setminus C$ there is $u\in V(H)$ such that $u\in L(v)$ and 
 the neighbourhood  $N_H(u)$ contains all images in $\phi$ of neighbours of $v$.
 
 Both properties can be clearly checked in time polynomial in  the input length.
\end{proof}
 
Below we show that this simple upper bound is unlikely to be
substantially improved.
\noindent\usebox{\boxvcbound}
\begin{proof}
Let $G$ be an $n$-vertex graph of maximum degree $4$ that needs to be $3$-colored.
Let $\gamma$ be a large enough constant such that
\begin{equation}
\label{eq:gammavc}
\frac{\log x}{40} \ge 2
\end{equation}
for all $x > \gamma$.
First consider the case when $h(n) < \gamma$ for all~$n$.
Then, since $h(n) \ge 3$ for all $n$, an algorithm 
with the running time
$\cOs\left(h(n)^{c\cdot \vc(G)}\right)$.
would solve \textsc{3-Coloring} in time 
\[ \cOs\left(h(n)^{c \cdot \vc(G)}\right)= \cOs\left(\gamma^{cn}\right) \, .\]
Then for a small enough constant $c$ such that $\gamma^c < \beta$ one gets
a contradiction with Lemma~\ref{lemma:3col}. Thus, in the following we assume that $h(n) \ge \gamma$
for large enough values of~$n$.

We now use Lemma~\ref{lemma:3coltolhomvc} to construct an equisatisfiable instance $(G',H')$
of LIST-HOM with $\vc(G')=n/r$ and $|V(H')| \le 300^r$ for a parameter $2 \le r \le n$ to be chosen later. 
We then use Lemma~\ref{lemma:lhomtohom}
to construct an equisatisfiable instance $(G'',H'')$ such that
\[
\vc(G'') \le \vc(G')+(|V(H')|+1)(\chi(H')+11) \le n/r+(300^r+1)(300^r+11) \le n/r + 2^{20r}\,,
\]
\[ V(H'') \le (|V(H')|+1)(\chi(H')+11) \le 2^{20r} \, .\]

Now set
\[r'= \frac{\log n}{40}\,, \quad 
r = \min \left( r', \frac{\log h\left( \frac{2n}{r'}\right)}{40 
}\right) \, .\]
From \eqref{eq:gammavc} it follows that $r \ge 2$ for large enough~$n$. It is also clear that $r \le n$ for large enough~$n$.
Then \[\vc(G'') \le n/r+2^{20r} \le n/r+2^{20r'}=n/r+2^{\log n/2}=n/r+\sqrt{n} < 2n/r\] for large enough~$n$. 
Also, \[|V(H'')| \le 2^{20r} \le h(2n/r')^{1/2} \le h(2n/r') \le h(2n/r) \le h(\vc(G'')) \le h(|V(G'')|)\, .\]
Hence one can use an algorithm from the theorem statement for an instance $(G'',H'')$.

We now show that 
\begin{equation}
\label{eq:vca}
r \ge \frac{\log h\left( \frac{2n}{r'} \right)}{40d} \,.
\end{equation}
This is clearly true if $r < r'$. Let now $r=r'$. Then
\[ \log h\left( \frac{2n}{r'} \right) \le \log h(n) \le d\log n = 40dr' = 4dr \]
which implies~\eqref{eq:vca}.
Using this inequality, we conclude that an algorithm solving HOM in  
$\cOs\left(h(n)^{c\cdot \vc(G)}\right)$
allows to solve \textsc{3-Coloring} in time
\[\cOs\left(|V(H'')|^{\vc(G'')}\right) \le 
  \cOs\left( h\left( \frac{2n}{r'} \right)^{2cn/r}\right)=   
  \cOs\left( 2^{2cn\log h\left( \frac{2n}{r'} \right)/r}  \right)  \le 
  \cOs\left( 2^{80cdn}\right) \, .
   \]
Thus, for a small enough constant $c$ such that $2^{80cd} < \beta$ we get a contradiction with Lemma~\ref{lemma:3col}.
\end{proof}

\subsection{Locally injective homomorphisms}
\noindent\usebox{\boxlocalbound}
\begin{proof}
The proof 
is almost identical to the proof of Theorem~\ref{main:theorem_homs}.

Let us  observe that in the reduction  
in Lemma~\ref{lemma:3coltolisthom},  in graph $G'$, we take a coloring (in the proof we refer to such coloring as to labeling) of the square of $G'$. Thus 
for every bucket $v$ of $G'$, all its neighbors are  labeled by different colors. The way we construct the lists, 
only buckets with the same labels can be mapped to the same vertex of $H'$. Thus 
for every vertex $v$ of $G'$, no pair of its neighbors can be mapped to the same vertex. Hence every list homomorphism from $G'$ to $H'$ is locally injective. Therefore the result of Lemma~\ref{lemma:3coltolisthom} holds for locally injective list homomorphisms as well and we obtain the following lemma.

\begin{lemma}\label{lemma:3-local-col-list}
There exists an algorithm that given an $n$-vertex graph $G$
of maximum degree four and an integer $2 \le r \le n$ constructs a pair of graphs  $G'$ and $H'$ with
\[ |V(G')|=\lceil n/r \rceil \text{ and } |V(H')| \le r^{50 r} \]  such that there is a locally injective list homomorphism from $G'$  to $H'$  if and only if $G$ is $3$-colorable. The running time of the algorithm is polynomial
in $n$ and the size of the output graphs.
\end{lemma}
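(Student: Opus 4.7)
The plan is to argue that the construction of Lemma~\ref{lemma:3coltolisthom} already produces a locally injective list homomorphism, so no change to the reduction is required, only an extra verification. Concretely, I would re-examine the pair $(G',H')$ built in Lemma~\ref{lemma:3coltolisthom}: $G'=G_r$ is an edge-preserving $r$-grouping, and the labeling $\ell$ is a proper coloring of the square $G'^2$ with $L=16r^2+1$ colors. The list constraint in that reduction only allows a bucket $B$ to be mapped to a configuration whose first coordinate equals $\ell(B)$. This is the structural feature I want to exploit.

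The key step will be the following observation. Let $\phi\colon V(G')\to V(H')$ be any list homomorphism produced by the construction, and let $B_1,B_2$ be two distinct neighbors of the same bucket $B$ in $G'$. Since $\ell$ properly colors $G'^2$ and $B_1,B_2$ are at distance at most $2$ from each other (both adjacent to $B$), we have $\ell(B_1)\ne \ell(B_2)$. The list constraint forces the first coordinate of $\phi(B_1)$ to equal $\ell(B_1)$ and the first coordinate of $\phi(B_2)$ to equal $\ell(B_2)$, so $\phi(B_1)\ne\phi(B_2)$. This is exactly the local injectivity condition at $B$, and since $B$ was arbitrary, $\phi$ is locally injective on the whole of $G'$. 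Thus, in the forward direction (a $3$-coloring of $G$ yields a list homomorphism from $G'$ to $H'$) the homomorphism is automatically locally injective.

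For the backward direction there is nothing to do: a locally injective list homomorphism from $G'$ to $H'$ is, in particular, a list homomorphism, so Lemma~\ref{lemma:3coltolisthom} already guarantees that $G$ is $3$-colorable. The bounds $|V(G')|=\lceil n/r\rceil$ and $|V(H')|\le r^{50r}$, as well as the polynomial running time, are inherited verbatim from Lemma~\ref{lemma:3coltolisthom}. The only mild obstacle I anticipate is bookkeeping in the forward direction: one must check that the natural configuration assigned to each bucket (label $\ell(B)$ plus the restriction of the $3$-coloring to $N_G[B]$) does not collide across neighboring buckets, but this is immediate from the distinctness of labels on $N_{G'}(B)$ established above. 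With Lemma~\ref{lemma:3-local-col-list} in hand, the rest of the proof of Theorem~\ref{main:theorem_local_homs} proceeds by plugging it into Lemma~\ref{lemma:lhomtohom} and choosing the same parameters $r,r'$ as in the proof of Theorem~\ref{main:theorem_homs}, noting that the gadgets of Lemma~\ref{lemma:lhomtohom} can be forced to preserve local injectivity by the same label-matching argument on the clique-augmented blocks $T_{h,t+3}$ (which rigidly map to themselves) and on the matching $A_h$ (whose pairs $\{a_i,b_i\}$ are rigidly preserved).
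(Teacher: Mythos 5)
Your proposal is correct and follows essentially the same route as the paper: the labeling is a proper coloring of the square of $G'$, so two buckets with a common neighbor have distinct labels, and the list constraints force the first coordinate of a bucket's image to equal its label, whence every list homomorphism from $G'$ to $H'$ is automatically locally injective. The equivalence and the bounds then carry over verbatim from Lemma~\ref{lemma:3coltolisthom}, exactly as in the paper.
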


In the reduction of  Lemma~\ref{lemma:lhomtohom}, we established that every homomorphism from $G'$ to $H'$
maps   $T_{h,t+3}$ to $T_{h,t+3}$ and $A_h$ to $A_h$ so that $\{a_i, b_i\}$ is mapped to $\{a_i, b_i\}$. 
Thus for vertices of these structures, every homomorphism is locally injective. 
By Claim~\ref{claim4}, 
any homomorphism $\phi$ from $G'$ to $H'$ maps $G$ to $H$. Therefore there  is a locally injective homomorphism from  $G'$ to $H'$ if and only if there is a  locally injective list homomorphism from $G$ to $H$.  Then by making use of Lemma~\ref{lemma:3-local-col-list},  the calculations performed in the proof of Theorem~\ref{main:theorem_homs} we conclude with the proof of the theorem.
\end{proof}
\section{Conclusion}\label{sec:conclusion}
 We conclude with several open problems around graph homomorphisms. 
\begin{itemize}
\item The first natural question is if our bounds are tight. For example, can the bound in
Theorem~\ref{main:theorem_homs}  be improved to match asymptotically the $2^{\cO(n\log{h})}$ running time of the brute-force algorithm?
 On the other hand, there is no argument ruling out the possibility of solving the problem in time $2^{o(n\log{h})}$. 
\item The second question is due to Daniel Lokshtanov \cite{DLprivat14}:
Is it possible to color a graph $G$ in $h$ colors in time $h^{o(\vc(G))}$.
\item Deciding if an $n$-vertex graph $F$ is a subgraph of an $n$-vertex graph 
$G$  can be done in time $2^{\cO(n\log{n})}$ by trying all possible vertex permutations of both graphs.
Can this problem be solved in time $2^{o(n\log{n})}$?
\end{itemize}
\paragraph{Acknowledgement} We are grateful to Daniel Lokshtanov and Saket Saurabh for helpful discussions. 
\vspace{-0.2in}
\bibliographystyle{plain}
\bibliography{hom}

\end{document}